\renewcommand{\AA}{\mathbb{A}}
\newcommand{\BB}{\mathbb{B}}
\newcommand{\MM}{\mathbb{M}}
\newcommand{\CC}{\mathbb{C}}
\newcommand{\ZZ}{\mathbb{Z}}
\renewcommand{\SS}{\mathbb{S}}
\newcommand{\LL}{\mathbb{L}}
\newcommand{\ba}{\mathbf{a}}
\newcommand{\bs}{\mathbf{s}}
\newcommand{\bw}{\mathbf{w}}
\newcommand{\bx}{\mathbf{x}}
\newcommand{\bA}{\mathbf{A}}
\newcommand{\floor}[1]{\left\lfloor #1 \right\rfloor}
\newcommand{\ceil}[1]{\left\lceil #1 \right\rceil}
\newtheorem{thm}{Theorem}
\newtheorem{lem}[thm]{Lemma}
\newtheorem{defn}[thm]{Definition}
\theoremstyle{remark}
\begin{document}
\title{On the Foundation of Sparse Sensing (Part II): Diophantine Sampling and Array Configuration}

\hyphenation{IEEE Transactions on Signal Processing}

\author{Hanshen Xiao, Beining Zhou, and Guoqiang Xiao

\thanks{Hanshen Xiao is with CSAIL and the EECS Department, MIT, Cambridge, USA. E-mail: hsxiao@mit.edu.}
\thanks{Beining Zhou is with Department of Computer Science, Stanford University, Stanford, USA. E-mail: cathyzbn@stanford.edu.}
\thanks{Guoqiang Xiao is with the College of Computer and Information Science, Southwest University, Chongqing, China. E-mail: gqxiao@swu.edu.cn.}
}

\maketitle

\begin{abstract}
In the second part of the series papers, we set out to study the algorithmic efficiency of sparse sensing. Stemmed from co-prime sensing, we propose a generalized framework, termed Diophantine sensing, which utilizes generic {\em Diophantine equation} theory and higher-order {\em sparse ruler} to strengthen the sampling time, the degree of freedom (DoF), and the sampling sparsity, simultaneously. 

With a careful design of sampling spacings either in the temporal or spatial domain, co-prime sensing can reconstruct the autocorrelation of a sequence with significantly denser lags based on Bézout theorem. However, Bézout theorem also puts two practical constraints in the co-prime sensing framework. For frequency estimation, co-prime sampling needs $\Theta((M_1+M_2)L)$ samples, which results in $\Theta(M_1M_2L)$ sampling time, where $M_1$ and $M_2$ are co-prime down-sampling rates and $L$ is the number of snapshots required. As for Direction-of-arrival (DoA) estimation, the sensors cannot be arbitrarily sparse in co-prime arrays, where the least spacing has to be less than a half of wavelength.

Resorting to higher-moment statistics, the proposed Diophantine framework presents two fundamental improvements. First, on frequency estimation, we prove that given arbitrarily large down-sampling rates, there exist sampling schemes where the number of samples needed is only proportional to the sum of DoF and the number of snapshots required, which implies a linear sampling time. Second, on Direction-of-arrival (DoA) estimation, we propose 
two generic array constructions such that given $N$ sensors, the minimal distance between sensors can be as large as a polynomial of $N$, $\Theta(N^q)$, which indicates that an arbitrarily sparse array (with arbitrarily small mutual coupling) exists given sufficiently many sensors. In addition, asymptotically, the proposed array configurations produce the best known DoF bound compared to existing coarray designs. 
\end{abstract}

\section{Introduction}
\noindent In the first part of the series papers, we have theoretically studied the necessary and sufficient (robust) condition for sparse sensing theory, characterized by (multiple) remaindering encoding problem. We connected and specified the relationship between Chinese Remainder Theorem method \cite{xiao2015error, xiao2016towards, tsp2018, sp2017notes, tvt2019, RingRCRT2020, sp2021, WCL2019} and co-prime sensing \cite{coprime1, coprime2}. On the other hand, we also pointed out that the deterministic necessary sampling constraint can be significantly relaxed at a cost of negligible failure, where co-prime sensing provides a concrete example.   

The study on co-prime sparse sensing, which was initialized in \cite{coprime1,coprime2}, spans almost last decade and has witnessed tremendous progress \cite{co-array1,pushing,qin,localization,co-array2,sp2020,sp2020_2,sp2019,sp2017}. It can be used in autocorrelation reconstruction based estimation with enhanced degrees of freedom (DoF), while the sparsity, in terms of sampling spacing either in time or space domain, is also preserved. The key idea of co-prime sensing is that, for two sequences $\mathcal{M}_1=\{m_1M_1\tau, m_1=0,1,2,...,2M_2-1\}$ and $\mathcal{M}_2=\{m_2M_2\tau, m_2=1,2,...,M_1-1\}$, where $M_1$ and $M_2$ are co-prime integers and $\tau$ is a temporal/spatial distance unit, the difference set of the pairs of elements from $\mathcal{M}_1$ and $\mathcal{M}_2$ will cover all consecutive multiples of $\tau$ starting from $-(M_1M_2-1)\tau$ to $(M_1M_2-1)\tau$. In the applications of frequency estimation, the two sequences, $\mathcal{M}_1$ and $\mathcal{M}_2$, represent the sampling instants in the time domain from two samplers; while for the case of Direction-of-arrival (DoA) estimation, $\mathcal{M}_1$ and $\mathcal{M}_2$ correspond to the sensor positions of two uniform linear arrays (ULA), respectively.

Such a carefully designed difference set from $\mathcal{M}_1$ and $\mathcal{M}_2$ can construct $O(M_1M_2)$ consecutive lags for autocorrelation estimation, which is commonly used in subspace estimation methods, such as MUSIC and ESPRIT \cite{music} etc. On the other hand, the two co-prime numbers $M_1$ and $M_2$ can be arbitrarily large, with a small failure specified in the first paper (Theorem 3). Enhanced sparsity and DoF suggest direct improvements over the physical restrictions of samplers, where fewer samplers of larger undersampling rates are allowed.  Such idea can be generalized to any integer ring \cite{li2019coprime}. Apart from the one-dimension case, high-dimension cases have also been explored \cite{twodimension-1,twodimension-2,multidimension}. 

Though co-prime sensing has been well studied, there are two practical problems still remaining open.
\begin{enumerate}
\item Can sparse sensing match the optimal sampling time with respect to resolution and snapshot needed. For frequency estimation, if $L$ snapshots are required for autocorrelation estimation, the time delay of co-prime sampling is $\Theta(LM_1M_2)$. When $M_1$ and $M_2$ are large enough, such delay could be unacceptable. Ideally, a linear sampling time is expected. 
\item Can an array configuration be arbitrarily sparse for DoA estimation? A closer placement of sensors will incur heavier mutual coupling. However, the minimal distance between sensors in existing co-prime arrays and (augmented) nested arrays  \cite{2014nested,super1,super2,super3,nest,2019nested}, is fixed to be less than a half of wavelength.
\end{enumerate}

In this paper, we answer both questions affirmatively. Before proceeding, we first formalize the theoretical foundation of the sparse sensing framework.

\noindent \textbf{Sparse Ruler:} Simply put, either in temporal or spatial domain, it is desired to take as few and sparse as possible sampling instants such that their difference set can produce as many as possible consecutive numbers. \footnote{Strictly speaking, distinct numbers, which are not necessarily being consecutive, are usually sufficient. Intuitively, the number of distinct numbers generated correspond to that of independent equations one can list regarding the parameters to be estimated. However, one may have to resort to more complicated compressed sensing methods for reconstruction \cite{unique2015, unique2019}, compared to the consecutive difference case. } Indeed, this is a classic combinatorics problem, which can be dated back to 1949 by Erdos and Gal \cite{erdos1948}, formally known as {\em sparse ruler}. Sparse ruler poses the following interesting question that, if we want a ruler to measure any distance between 1 to $K$, what is the least number of marks required? Clearly, it is not necessary to keep every mark varying from 0 to $K$, like a regular ruler. For example, when $K=10$, a better non-uniform mark selection exists, such as $\{0, 1, 2, 3, 6, 10\}$.  \footnote{With different motivations, there are also some other variants of sparse ruler, for example, Golomb ruler, dated back to 1930s \cite{golomb1953}, which requires that the differences between any marks have to be distinct.} Many asymptotic properties on the least number of marks in a sparse ruler have been studied in \cite{erdos1948, leech1956}, and so far, the best known concrete construction is shown in \cite{super3}, which is close to be optimal. With the above understanding, many practical concerns with respect to the sampling and estimation performance can be mathematically characterized within the sparse ruler model. Besides the number of consecutive differences produced, which usually determines the DoF and resolution, the other two challenges mentioned earlier regarding sampling time and sparsity can be equivalently described below.
\begin{enumerate}
    \item \textbf{Sampling Time}: In temporal sampling, it is expected that the points sampled (marks placed) are within a relatively small range, of which the maximal magnitude determines the sampling time required. 
    \item \textbf{Sensor Sparsity}: For spatial sampling (array configuration), it is expected that the minimal spacing between any two sensors (marks placed) to be as large as possible, which determines the coupling effect and the hardware requirement (chip size and circuit architecture) in practice. 
\end{enumerate}

\noindent \textbf{Contribution and Origination}: To tackle the above-mentioned problems, we switch to consider higher-order sparse ruler and apply generic Diophantine equation theory to provide concrete constructions. We provide two ways of such generalization. Our first result is on complex waveforms. In Section \ref{sec:complex_fre}, we take frequency estimation as an example and introduce the generic model of Diophantine sampling. For complex waveforms, we propose a novel idea to apply the third order statistics to estimate the second order auto-correlation. Such idea can be straightforwardly generalized to any higher orders. Applying three samplers, Theorem \ref{fre-sampling-thm} suggests that the proposed sampling scheme only requires $O(K+L)$ samples, where $K$ is the number of consecutive lags required and $L$ is the number of snapshots required. This matches the optimal sampling time asymptotically. We also describe a generic Diophantine sampling strategy with arbitrarily many samplers, concluded in Theorem \ref{fre_mul_est_thm}. Following that, in Section \ref{sec:complex_doa}, still restricted to complex waveform, we propose a third-order Diophantine array, which produces $\Theta(N^3)$ DoF and the minimal distance between sensors is $\Theta(N)$, specified in Theorem \ref{DoA}. 

From the beginning of Section \ref{sec:2q}, we consider another direction of generalization, which follows the $2q$-th order cumulant based method \cite{2q2011,2q2019}. We formally introduce the background and the concept of {\em $2q$-th  symmetric difference} in Section \ref{sec:2q}. In Section \ref{sec:shift-array}, we present a key architecture, termed {\em shifted array}, as the building block of higher-order co-array proposed.  Following that, in Section \ref{sec:4-order} and Section \ref{sec:6-order}, we provide two concrete construction frameworks for the forth-order and sixth-order scenario, respectively, which is also close to be optimal. Besides augmented DoF, such construction also strengthens the sparsity that the minimal spacing between any two sensors is $\Theta(N^2)$ and $\Theta(N^3)$, respectively. Taking those constructions as building blocks, we give a generic $2q$-th-order construction, which provides $O(17^{q/3}N^{2q})$ DoF. The simulation can be found in Section \ref{sec:sim}.  A summary of the results can be found in Table \ref{table_main}.

\begin{table*}
\caption{Summary of Diophantine Sampling and Array Proposed }
\begin{threeparttable}
\begin{tabular}{c | r| r| r| } 
\hline
\hline
\multicolumn{4}{c}{\textbf{{Diophantine Sampling}}}\\
\hline
\hline
Sampling Algorithm & Undersampling Rate Selection  & Sampling Time & Description and Analysis\\
 \hline
Three-sampler Sampling & $M_1 = 2+\Gamma, M_2 = 3+\Gamma, M_3 = 5 +\Gamma$  & $(2K+3L)(5+\Gamma)T_s$ & Theorem \ref{fre-sampling-thm} \\
\hline 
$N$-sampler Sampling & $M_i = i+\Gamma, i=1,2,...,N$ & $2(N-1)(K+ \frac{L\pi^2}{N(N-1)(N-2)})T_s$ & Theorem \ref{fre_mul_est_thm}\\
\hline
\hline
\multicolumn{4}{c}{\textbf{{Diophantine Array}}}\\
\hline
\hline
Array Configuration & Minimal Adjacent Sensor Spacing & DoF & Description and Analysis \\
 \hline
Third-order Array & $N\lambda/12$ & $N^3/27-1$ & Theorem \ref{DoA}   \\
Fourth-order Array &  $2(N/4-1/2)^2$ & $5(N/4-1/2)^4$ & Section \ref{sec:4-order}  \\
Sixth-order Array & $(N/6-1)^3$ &  $17(N/6-1)^6$ & Section \ref{sec:6-order}  \\
$2q$-th-order Array & $O((N/2q)^3)$ & $O(17^{\frac{q}{3}}(N/2q)^{2q})$  & Section \ref{sec:2q-general} \\
\hline 
\end{tabular}
  \begin{tablenotes}
      \footnotesize
      \item In Diophantine sampling, $\Gamma$ can be arbitrary positive integer. $K$ and $L$ represent the number of snapshot and lags required, respectively. $T_s = 1/f_s$, where $f_s$ is the Nyquist sampling rate.
      \item In Diophantine array, $N$ represents the number of sensors and $\lambda$ denotes the wavelength.
    \end{tablenotes}
\end{threeparttable}
  
\label{table_main}
\end{table*}

\section{Diophantine Equation Based Sampling}
\label{sec:complex_fre}
\subsection{Review of Co-prime Sampling}
\noindent We first flesh out the main idea of co-prime sampling. Let us consider a complex waveform formed by $D$ sources,
\begin{equation}
x(t) = \sum_{i=1}^{D} A_{i}e^{j(\omega_{i}t+\phi_{i})}+w(t) = \sum_{i=1}^{D} A_{i}e^{j\phi_{i}}e^{j\omega_{i}t}+w(t)
\end{equation}
where $\omega_{i} = 2\pi f_{i}T_{s}$ is the digital frequency and $T_s$ is the Nyquist sampling interval. $w(t)$ represents a Gaussian noise underlying of power $\sigma^2$.  $A_{i}$, $f_i$ and $\phi_{i}$ are the amplitude, the frequency and the phase of the $i$-th source, respectively. The phases $\phi_{i}$ are assumed to be random variables uniformly distributed in the interval $[0,~2\pi]$ and uncorrelated from each other. The two sampled sequences with sampling interval $M_1T_s$ and $M_2T_s$, respectively, are expressed as
\begin{equation}
\begin{aligned}
 &x_1[n] = x(nM_1T_s)+w_1(n)\\
 &x_2[n] = x(nM_2T_s)+w_2(n)
\end{aligned}
\end{equation}
where $w_1(n)$ and $w_2(n)$ are zero mean Gaussian white noise with power $\sigma^2$. 

With Bézout's identity, there exist $\{\alpha_1 = \langle M_1^{-1} \rangle_{M_2}, \beta_1 =  \langle -M_2^{-1} \rangle_{M_1} \}$, where $
\langle X \rangle_Y$ denotes the residue of $X$ modulo $Y$, such that $\alpha_1M_1- \beta_1M_2=1$. Thus, we can always find some solutions to the following equation by selecting ${\alpha}_k =  k{\alpha}_1$ and ${\beta}_k = k{\beta}_1$,
\begin{equation}
{\alpha}_kM_1- {\beta}_kM_2 = k
\end{equation}
Especially, in \cite{coprime1,coprime2}, it is shown that for $r \in \mathbb{Z}$, there exist $m^r_{1k} \in \{rM_2, rM_2+1, ... ,(r+2)M_2-1\}$ and ${m}^r_{2k} \in \{rM_1, rM_1+1, ... ,(r+1)M_1-1\}$ such that
\begin{equation}
\label{Bzout}
{m}^r_{1k}M_1- {m}^r_{2k}M_2 = k
\end{equation}
for $k \in \{-M_1M_2,...,0,1, ...,M_1M_2\}$. Furthermore, the autocorrelation of $x[n]$, which is a sampling sequence of $x(t)$ with a Nyquist sampling interval $T_s$, can be expressed as
\begin{equation}
\label{corre}
R_x[k] = \mathbb{E}_n\{x[n]x^*[n-k]\}=\sum_{i=1}^D A^2_i e^{j\omega_{i}k}. 
\end{equation}
According to equation (\ref{Bzout}), $R_x[k]$ can be also estimated by using the average of the products of those pairs $\{x_1[m^r_{1k}], x_2[m^r_{2k}]\}$\cite{autocorrelation2017,autocorrelation2015}. To this end, by rewriting $R_x[k]$ in the context of $x_1[m^r_{1k}]$ and $x_2[m^r_{2k}]$, i.e., $E_r\{x_1[m^r_{1k}]x^*_2[m^r_{2k}]\}$, we have Equation (\ref{haha}) below. (\ref{haha}) will be an unbiased autocorrelation estimation if the underline term is zero. In Part I (Theorem 3), we specify the condition. 
 
%that $\mathbb{E}_n\{x[n]x^*[n-k]\}=E_r\{x_1[m^r_{1k}]x^*_2[m^r_{2k}]\}$ in (\ref{haha}), 
\newcounter{mytempeqncnt}

\begin{figure*}[!h]
\normalsize
\hrulefill
\setcounter{mytempeqncnt}{\value{equation}}
\setcounter{equation}{5}
  \begin{equation}
  \label{haha}
\begin{aligned}
&\mathbb{E}_r[x_1[{m}^r_{1k}]x^*_2[{m}^r_{2k}]] =\mathbb{E}_r \{[ \sum_{i=1}^{D} A_ie^{j\phi_i}e^{j\omega_{i}{m}^r_{1k}M_1}+w_1({m}^r_{1k})][ \sum_{i=1}^{D} A_ie^{-j\phi_i}e^{-j\omega_{i}{m}^r_{2k}M_2}+w^*_2({m}^r_{2k}) ]\}\\
& = \mathbb{E}_r [ ~\sum_{i=1}^{D} A^2_i e^{j\omega_{i}({m}^r_{1k}M_1-{m}^r_{2k}M_2)}~]+\mathbb{E}_r[~\sum_{i \not = l}A_ie^{j\phi_i} e^{j\omega_{i}{m}^r_{1k}M_1} A_le^{-j\phi_l}e^{-j\omega_{l}{m}^r_{2k}M_2}~]+ \mathbb{E}_r [w_1({m}^r_{1k})w^*_2({m}^r_{2k})]\\
& = \mathbb{E}_r [ ~\sum_{i=1}^{D} A^2_i e^{j\omega_{i}k}] + \underline{\mathbb{E}_r[~\sum_{i \not = l} A_i A_l e^{j(\phi_i-\phi_l)} e^{j\omega_{l}k}  e^{j{m}^r_{1k}M_1( \omega_i -\omega_{l})}~]}  % = \sum_{i=1}^{D} A^2_i e^{j\omega_{i}k} = \mathbb{E}[x[n]x^*[n-k]]
\end{aligned}
\end{equation}
\setcounter{equation}{\value{mytempeqncnt}}
\hrulefill
\vspace*{4pt}
\end{figure*}
\setcounter{equation}{6}
If $L$ snapshots are used to estimate each $\mathbb{E}_n \{x[n]x^*[n-k]\}$, the delay of co-prime sampling is $\Theta(LM_1M_2T_s)$ \cite{complexity}. 
%\hrulefill
%\begin{figure*}[!h]
%\normalsize
%\setcounter{mytempeqncnt}{\value{equation}}
%\setcounter{equation}{5}
%    \begin{equation}
%          \label{corre}
%\begin{aligned}
%E[x[n]x^*[n-k]] &= E_n \{[ ~\sum_{i=1}^{D} A_ie^{j\CC_i}e^{j\omega_{i}n}+w(n)] [\sum_{i=1}^{D} A_ie^{-j\CC_i}e^{-j\omega_{i}(n-k)}+w^*(n-k)~ ]\} \\
%&= E_n [\sum_{i=1}^D A^2_ie^{j\omega_{i}k}] + E_n[\sum_{i \not = l} A_ie^{j\CC_i}e^{j\omega_{i}n}A_le^{-j\CC_l} e^{-j\omega_{l}(n-k)}] + E_n[w(n)w^*(n-k)]\\
%& = \sum_{i=1}^D A^2_ie^{j\omega_{i}k} + E_n [\sum_{i \not = l} A_iA_le^{j(\CC_i-\CC_l)}e^{j\omega_{l}k} e^{j(\omega_{i}- \omega_{l})n}]
%\end{aligned}
%\end{equation}
%\setcounter{equation}{\value{mytempeqncnt}}
%\hrulefill
%\vspace*{4pt}
%\end{figure*}
%\setcounter{equation}{6}
%It is noted that $w(n)$ is uncorrelated zero mean white noise sequences with power $\sigma_n^2$ and signals and noise are uncorrelated. Furthermore, we have
%\begin{equation}
%\begin{aligned}
%&E_n [\sum_{i \not = l} A_i A_le^{j(\CC_i-\CC_l)}e^{j\omega_{l}k} e^{j(\omega_{i}- \omega_{l})n}] \\
%&= \sum_{i \not = l} A_i A_le^{j(\CC_i-\CC_l)}e^{j\omega_{l}k}E_n [~ e^{j(\omega_{i}- \omega_{l})n}~] =0
%\end{aligned}
%\end{equation}
%Therefore, $E_n [x[n]x^*[n-k]] = \sum_{i=1}^D A^2_i e^{j\omega_{i}k}$.
\subsection{Diophantine Equation and Sampling}
\label{sec:two-sampler_dio}
\noindent In the Introduction section, we briefly mention the sparse ruler problem. Throughout the previous section, we have seen that the Bézout Theorem provides a concrete construction of the {\em mark selection}. The original sparse ruler problem and co-prime framework are both second-order based. A natural idea for generalization is to consider the higher-order difference. 

(Linear) Diophantine equation, which can be dated back to the 3rd century, yields the seed for the framework we consider here. In general, a linear Diophantine equation is in a form, 
\begin{equation}
\label{dio_generic}
    m_1M_1 + m_2M_2 + ... + m_tM_t = k, 
\end{equation}
for arbitrary $t$ integers $M_{[1:t]}$. The following lemma gives a necessary and sufficient condition that (\ref{dio_generic}) has a solution, which our following analysis will heavily rely on. 

\begin{lem}[\cite{hardy1979introduction}]
\label{hardy}
    The Diophantine equation (\ref{dio_generic}) has a solution if and only if $k$ is divided by $gcd(M_1,M_2, ..., M_t)$. 
\end{lem}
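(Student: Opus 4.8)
The plan is to establish this standard fact by induction on the number of terms $t$, reducing each step to the two-variable case, which is precisely Bézout's identity. Write $d = \gcd(M_1, \ldots, M_t)$. The necessity direction I would dispatch immediately: since $d \mid M_i$ for every $i$, every integer combination $\sum_{i=1}^{t} m_i M_i$ is a multiple of $d$, so the equation can have a solution only if $d \mid k$.

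For sufficiency, assume $d \mid k$. The base case $t = 1$ is trivial, and $t = 2$ is exactly Bézout's identity: the Euclidean algorithm yields integers $u, v$ with $u M_1 + v M_2 = \gcd(M_1, M_2)$, and multiplying through by $k / \gcd(M_1, M_2)$ gives the required solution. For the inductive step I would set $d' = \gcd(M_1, \ldots, M_{t-1})$, use the identity $\gcd(d', M_t) = \gcd(M_1, \ldots, M_t) = d$, and invoke the two-variable case to obtain integers $a, b$ with $a d' + b M_t = k$. Since $d'$ divides $a d'$, the induction hypothesis applied to $M_1, \ldots, M_{t-1}$ with target value $a d'$ produces $m_1, \ldots, m_{t-1}$ with $\sum_{i=1}^{t-1} m_i M_i = a d'$; setting $m_t = b$ then finishes the construction.

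An alternative route I would be equally happy to take is the structural one: the set $S = \{ \sum_{i=1}^{t} m_i M_i : m_i \in \mathbb{Z} \}$ is a subgroup of $\mathbb{Z}$ and hence of the form $e \mathbb{Z}$ for a unique $e \ge 0$; one then checks $e \mid M_i$ for all $i$ (because each $M_i \in S$) and $d \mid e$ (because $e \in S$), forcing $e = d$, so that $k \in S$ if and only if $d \mid k$. Either way, the substantive content is confined to the two-variable Bézout identity, which may simply be quoted from the cited reference.

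I do not anticipate a genuine obstacle here. The only points demanding a little care are the bookkeeping of greatest common divisors in the inductive reduction — in particular the associativity identity $\gcd(\gcd(M_1,\ldots,M_{t-1}), M_t) = \gcd(M_1,\ldots,M_t)$ — and the treatment of degenerate inputs such as some $M_i = 0$ or all of the $M_i$ vanishing, where the gcd and the subgroup $e\mathbb{Z}$ must be interpreted appropriately.
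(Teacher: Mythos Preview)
Your proof is correct and entirely standard, but note that the paper does not actually prove this lemma: it is stated with a citation to Hardy and Wright and used as a black box. There is thus nothing to compare against; your inductive reduction to B\'ezout's identity (or the alternative subgroup-of-$\mathbb{Z}$ argument) is exactly what one would find in the cited reference.
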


There are many ways to view (\ref{dio_generic}). In the context of temporal sampling, $M_{[1:t]}$ can represent the undersampling rates of $t$ samplers while $M_{[1:t]}$ can be the locations of sensors selected in an array. (\ref{dio_generic}) provides a formal way to consider the $t$-th order difference but also raises two natural questions,
\begin{enumerate}
    \item How to find solutions of (\ref{dio_generic}), where Lemma \ref{hardy} only captures the existence of solutions.
    \item How to use those generic difference for parameter reconstruction? So far, only the relationship between the second-order difference and auto-correlation is clear.  
\end{enumerate}
Our first result is to still rely on the auto-correlation reconstruction but via the higher-order difference (\ref{dio_generic}). We start by an auto-correlation reconstruction example with carefully designed third-order difference. 

Consider three samplers with undersampling rates $\{M_1, M_2, M_3\}$, respectively, and  we have 
\begin{equation}
\label{threedio}
m_1M_1 + m_2M_2 + m_3M_3 = k,
\end{equation}
Assume $gcd(M_1,M_2,M_3)=1$.
\footnote{Please note we do not require those three integers to be pairwise co-prime.}. We first present a generic semi-closed form of solutions $\{m_1, m_2, m_3\}$ in (\ref{threedio}). According to Lemma \ref{hardy}, there should exist two groups of integers, $\{a_1, a_2, a_3\}$ and $\{b_1,b_2,b_3\}$, such that
\begin{equation}
\label{Dio}
\left\{
            \begin{array}{lr}
a_1M_1+a_2M_2+a_3M_3 = 0 \\
b_1M_1+b_2M_2+b_3M_3 = 1 \\
 \end{array}
             \right.
\end{equation}
and the signs of $\{a_i\}$ are not the same, nor are the signs of $\{b_i\}$. This is because $M_{[1:3]}$ are all positive numbers larger than 1. Then, according to equation (\ref{Dio}) for $k=1,2,...,K$ and $l=1,2,...,L$, we have
\begin{equation}
\label{diosolution}
(kb_1+la_1)M_1 + (kb_2+la_2)M_2 +(kb_3+la_3)M_3 = k
\end{equation}
Without loss of generality, supposing $(kb_1+la_1)$ and $(kb_3+la_3)$ to be positive and $(kb_2+la_2)$ to be negative, we construct the following estimator
\begin{equation}
\label{auto-est-3-main}
    \mathbb{E}_{l}\big[{x}_1[kb_1+la_1]{x}^*_2[-(kb_2+la_2)]{x}_3[kb_3+la_3]\big]
\end{equation} 
to estimate the autocorrelation $R_x[k]$ at lag $k$, where ${x}_i[n] = x(M_inT_s) + w_i(n)$ denotes the sample sequence with a downsampling rate $M_i$. It is noted that, once $a_1M_1(\omega_i-\omega_v)+a_2M_2(\omega_u-\omega_v) \not = 0$ for $i\not =v \not = u \in \{1,2,...,D\}$, 
a negligible failure set, an unbiased estimation of the autocorrelation at lag $k$, $R_x[k]$, is produced as shown in (\ref{proof-main}). 
\begin{figure*}[!h]
\normalsize
\setcounter{mytempeqncnt}{\value{equation}}
\setcounter{equation}{10}
\begin{equation}
  \label{proof-main}
\begin{aligned}
&\mathbb{E}_l \{{x}_1[kb_1+la_1]{x}^*_2[-(kb_2+la_2)]{x}_3[kb_3+la_3]\} - R_x[k] \\
& = \sum_{i \not = u, v}  A_iA_uA_v e^{j[kb_1M_1\omega_i+kb_2M_2\omega_u+kb_3M_3\omega_v ]}e^{j(\phi_i-\phi_u+\phi_v)}\cdot \mathbb{E}_{l} \{e^{jl[a_1M_1(\omega_i-\omega_v)+a_2M_2(\omega_u-\omega_v)]}\} 
\end{aligned}
\end{equation}
\setcounter{equation}{\value{mytempeqncnt}}
\hrulefill
\vspace*{4pt}
\end{figure*}
\setcounter{equation}{11}

Thereby, (\ref{diosolution}) provides a {\em generic solution} to find proper sampling instants to estimate autocorrelation at arbitrary lag $k$. Consequently, the time delay of the proposed sampling scheme is determined by the maximal magnitude of the solution $m_i$ in (\ref{threedio}), i.e., $\max_{k,l,i} |kb_i+la_i|$, which can be $O((K+L)M)$ as proved in the following theorem with an arbitrarily large undersampling rate $M= \max \{M_1, M_2, M_3\}$. 
\begin{thm}
\label{fre-sampling-thm}
 When $a_1M_1(\omega_i-\omega_v)+a_2M_2(\omega_u-\omega_v) \not = 0$ for $i\not =v \not = u \in \{1,2,...,D\}$, there exist constants $c_1$ and $c_2$ such that the time delay of the proposed scheme is upper bounded by $(c_1K+c_2L)MT_s$, where $K$ is the number of consecutive lags and $L$ is the number of snapshots required to estimate  $\mathbb{E}_l \{{x}_1[kb_1+la_1]{x}^*_2[-(kb_2+la_2)]{x}_3[kb_3+la_3]\}$. Here, $M = \max\{M_1, M_2, M_3\}$.
\end{thm}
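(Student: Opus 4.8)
The plan is to reduce the statement to a one‑line estimate on the sizes of the coefficients $a_i,b_i$ in the generic solution (\ref{diosolution}), and then to exhibit a concrete admissible choice of $M_1,M_2,M_3$ for which those coefficients are bounded by absolute constants while $M=\max\{M_1,M_2,M_3\}$ is still arbitrarily large. As noted in the paragraph preceding the theorem, in the proposed scheme the $i$‑th sampler is queried, for lag $k\in\{1,\dots,K\}$ and snapshot $l\in\{1,\dots,L\}$, at index $kb_i+la_i$, i.e.\ at physical time $M_i(kb_i+la_i)T_s$; hence the acquisition delay is of order $\max_{i,k,l}\lvert M_i(kb_i+la_i)\rvert T_s$. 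Since $M_i\le M$ and, by the triangle inequality, $\lvert kb_i+la_i\rvert\le k\lvert b_i\rvert+l\lvert a_i\rvert\le K\max_i\lvert b_i\rvert+L\max_i\lvert a_i\rvert$ for all admissible $k,l$, this delay is at most $\bigl(K\max_i\lvert b_i\rvert+L\max_i\lvert a_i\rvert\bigr)MT_s$, a factor $2$ being absorbed into the constants if one measures the span between the earliest and latest sample rather than the distance from the origin. It therefore suffices to produce \emph{one} valid instance of the scheme in which $\max_i\lvert a_i\rvert$ and $\max_i\lvert b_i\rvert$ are $O(1)$; one then takes $c_1=\max_i\lvert b_i\rvert$ and $c_2=\max_i\lvert a_i\rvert$ (up to the factor $2$).

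For the construction I would take $M_1=\Gamma+2$, $M_2=\Gamma+3$, $M_3=\Gamma+5$ with $\Gamma$ an arbitrary positive integer, as in Table~\ref{table_main}, so that $M=\Gamma+5$ is arbitrarily large. Any common divisor of $M_1,M_2,M_3$ divides $M_2-M_1=1$, so $\gcd(M_1,M_2,M_3)=1$ and Lemma~\ref{hardy} guarantees that (\ref{Dio}) is solvable; explicitly $b=(-1,1,0)$ gives $b_1M_1+b_2M_2+b_3M_3=1$ and $a=(-2,3,-1)$ gives $a_1M_1+a_2M_2+a_3M_3=0$, both \emph{independently of $\Gamma$}. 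For these vectors and every $k\in\{1,\dots,K\}$, $l\in\{1,\dots,L\}$ one has $kb_1+la_1=-(k+2l)<0$, $kb_2+la_2=k+3l>0$, $kb_3+la_3=-l<0$: each index keeps a fixed sign over the whole rectangle $\{1,\dots,K\}\times\{1,\dots,L\}$, so (after relabelling the samplers to match the sign convention of (\ref{auto-est-3-main}) and (\ref{proof-main})) the third‑order estimator is well defined, and under the hypothesis $a_1M_1(\omega_i-\omega_v)+a_2M_2(\omega_u-\omega_v)\neq 0$ it is the autocorrelation estimator of (\ref{proof-main}). Every index used has magnitude at most $K+3L$, so every physical sampling instant has magnitude at most $(K+3L)MT_s$, which yields the claimed bound with small absolute constants, of the order listed in Table~\ref{table_main}.

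I expect the only genuine obstacle to be conceptual rather than computational: one must recognise that the Diophantine framework imposes \emph{only} the joint condition $\gcd(M_1,M_2,M_3)=1$ — not pairwise coprimality and not a product‑type constraint as in the Bézout‑based co‑prime scheme — and that this is precisely what allows $M_1,M_2,M_3$ to be packed into an interval of length $O(1)$ while their common lower bound $M$ grows without limit. That packing is what forces $\max_i\lvert a_i\rvert=\max_i\lvert b_i\rvert=O(1)$, hence a delay linear in $K+L$, in contrast with the $\Theta(LM_1M_2)$ delay of co‑prime sampling. A secondary point needing care is that a bounded \emph{integer} solution of (\ref{Dio}) is not by itself sufficient: the estimator (\ref{auto-est-3-main}) is realizable only if $kb_i+la_i$ retains a fixed sign over all $(k,l)$ employed, so $a$ and $b$ must be chosen with matching signs coordinatewise (with $b_3=0$ the sign of the third index is inherited from $a_3$), as above; the non‑degeneracy hypothesis of the theorem then governs unbiasedness, which is in any case independent of the delay bound proved here.
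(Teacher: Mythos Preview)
Your proposal is correct and follows essentially the same construction as the paper's own proof: take $M_i=\Gamma+c_i$ with fixed small $c_i\in\{2,3,5\}$, exhibit zero-sum integer vectors $a,b$ solving (\ref{Dio}) \emph{independently of $\Gamma$} (the paper makes the constraints $\sum_i a_i=\sum_i b_i=0$ explicit, you simply verify them), and read off the linear delay bound from $\max_{i,k,l}|kb_i+la_i|$. Your specific choice $a=(-2,3,-1)$, $b=(-1,1,0)$ differs from the paper's $a=(2,-3,1)$, $b=(1,-2,1)$ and even yields a slightly sharper constant ($K+3L$ versus $2K+3L$); the only cosmetic wrinkle is that your sign pattern $(-,+,-)$ cannot be mapped to the paper's $(+,-,+)$ by any permutation of the three samplers, so ``relabelling'' is not quite the right word---the correct fix is to use the conjugate estimator $x_1^{*}[\cdot]\,x_2[\cdot]\,x_3^{*}[\cdot]$ (equivalently, to estimate $R_x[-k]$), which is exactly what the paper's ``without loss of generality'' already licenses.
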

\begin{proof}
Consider a set of integers, say, $\{2,3,5\}$, which satisfies $gcd(2,3,5)=1$. Now, we try to find out two integer sets $\{a_1,a_2,a_3\}$ and  $\{b_1,b_2,b_3\}$ such that
\begin{equation}
\left\{
            \begin{array}{lr}
2a_1+3a_2+5a_3 = 0 \\
a_1+a_2+a_3=0  \\
2b_1+3b_2+5b_3 = 1 \\
b_1+b_2+b_3=0  \\
 \end{array}
             \right.
\end{equation}

We choose $\{a_1=2,b_1=1\}$, $\{a_2=-3,b_2=-2\}$, $\{a_3=1,b_3=1\}$ as solutions of (11). Due to $\sum_i a_i=0$ and $\sum_i b_i=0$, clearly they are also solutions to
\begin{equation}
(kb_1+la_1)(2+\Gamma) + (kb_2+la_2)(3+\Gamma)+(kb_3+la_3)(5+\Gamma) =k
\end{equation}
for any $\Gamma \in \mathbb{Z}$. Based on (12), we can get the lags to estimate the autocorrelation. Because $k \in \{1, 2, ..., K\}$ and $l \in \{ 1,2,...,L\}$, we have $\max_{i,k,l} |kb_i+la_i| \leq 2K+3L$.  Also,  $(kb_1+la_1)$ and $(kb_3+la_3)$, i.e., $(k+2l)$ and $(k+l)$, are always positive, while $(kb_2+la_2)$, i.e., $(-2k-3l)$, is negative. Thus, the total time delay is upper bounded by $\max_{i,k,l} |kb_i+la_i| \cdot \max_{i}M_iT_s \leq (2K+3L)(5+\Gamma)T_s$.
\end{proof}

The proof of Theorem \ref{fre-sampling-thm} is developed with a specific construction, where $M_{[1:3]}$ are selected as $\{M_1 = 2+\Gamma, M_2=3+\Gamma, M_3= 5+\Gamma\}$. Such a sampling strategy allows arbitrary large undersampling rates (with a sufficiently large $\Gamma$), while the number of sampling instants is independent of $\Gamma$ but only determined by $K$ and $L$, the number of lags and snapshots required, respectively, which suggests a linear sampling time $O((K+L)MT_s)$.

\subsection{Generalization to Multiple Samplers}
\noindent  In the following, we extend the above idea to a general framework provided multiple samplers. Given $N$ samplers, of which the sampling rates are denoted by $M_1, M_2, ... ,M_{N}$, respectively, a distributed Diophantine sampling can be naturally constructed by selecting any three of them and applying the scheme presented in Section \ref{sec:two-sampler_dio} as a building block.

With a similar reasoning, for a subgroup of triple samplers, say, $\{M_{i_1}, M_{i_2}, M_{i_3}\}$, $i_1, i_2, i_3 \in \{1,2,...,N\}$, we still first consider constructing two sets of solutions, $\{a_{i_1}, a_{i_2}, a_{i_3}\}$ and $\{b_{i_1}, b_{i_2}, b_{i_3}\}$, such that
\begin{equation}
\label{wanted}
\left\{
            \begin{array}{lr}
a_{i_1}M_{i_1}+a_{i_2}M_{i_2}+a_{i_3}M_{i_3} = 0 \\
a_{i_1}+ a_{i_2}+ a_{i_3}=0 \\
b_{i_1}M_{i_1}+b_{i_2}M_{i_2}+b_{i_3}M_{i_3}=1 \\
b_{i_1}+ b_{i_2}+ b_{i_3}=0 \\
\end{array}
             \right.
\end{equation}
which can be simplified to find out $a_{i_1},a_{i_3},b_{i_1},b_{i_3}$,
\begin{equation}
\label{feiyang}
\left\{
            \begin{array}{lr}
a_{i_1}(M_{i_1}-M_{i_2})+a_{i_3}(M_{i_3}-M_{i_2}) = 0 \\
b_{i_1}(M_{i_1}-M_{i_2})+b_{i_3}(M_{i_3}-M_{i_2}) = 1  \\
 \end{array}
             \right.
\end{equation}
In the following, we set to figure out how many triplet subgroups can be selected from $\{M_1, M_2, ... ,M_N\}$ such that the solutions of (\ref{feiyang}) exist. We select ${M_1, M_2, ... ,M_N}$ as the sequence of consecutive numbers starting from $1$ to $N$ shifted by some integer $\Gamma$, i.e, $M_i = i+\Gamma$. For any $M_i$, we consider the following sequence
\begin{equation}
\label{seq}
M_1-M_i, M_2-M_i, ... ,M_N-M_i
\end{equation}
and try to estimate the number of co-prime pairs among them. Since the sequence in (\ref{seq}) are still consecutive numbers ranging from $(-N,N)$, the number of primes among the sequence is upper bounded by $\pi(N)$, where $\pi(N)$ denotes the number of primes no bigger than $N$. Thus, by randomly picking any two of them, the probability that the two picked numbers are co-prime is lower bounded by
\begin{equation}
\prod_{j=1}^{\pi(N)} (1-\frac{1}{P^2_j}) > \prod_{j=1}^{\infty} (1-\frac{1}{P^2_j}) = \frac{6}{\pi^2}
\end{equation}
where $P_j$ is the $j$-th prime in the natural order and the above inequality follows from the density of primes \cite{hardy1979introduction}. Here, we use the fact that if we randomly select two numbers from $\mathbb{Z}$, the probability that they both share a prime factor $p_j$ is $\frac{1}{p^2_j}$. Therefore, we can totally find at least $\frac{6}{\pi^2}\binom{N}{3}$, i.e., $\frac{1}{\pi^2}N(N-1)(N-2)$, triplet sets such that $(M_{i_1}-M_{i_2})$ and $(M_{i_3}-M_{i_2})$ are co-prime. Consequently, there exist solutions satisfying (\ref{feiyang}), which is equivalent to that there are solutions to (\ref{wanted}).
Without loss of generality, we assume $M_{i_1}>M_{i_2}>M_{i_3}$, which implies $M_{i_1}-M_{i_2}>0$ and $M_{i_3}-M_{i_2}<0$ in (\ref{feiyang}). Hence, we can specifically set $a_{i_1}=M_{i_2}-M_{i_3}$, $a_{i_3}=M_{i_1}-M_{i_2}$, $b_{i_1}=\langle (M_{i_1}-M_{i_2})^{-1} \rangle_{(M_{i_2}-M_{i_3})}$ and $b_{i_2}=\langle (M_{i_3}-M_{i_2})^{-1} \rangle_{(M_{i_1}-M_{i_2})}$, which are all positive. Thus, both $a_{i_2}$ and $b_{i_2}$ should be negative due to the restrictions in (\ref{wanted}). Therefore, from (\ref{feiyang}), it is clear that both $|a_i|$ and $|b_i|$, if exist, are upper bounded by $2(N-1)$. Moreover, for $k=1,2,...,K$ and $l=1,2,...,L$, $(kb_{i_2}+la_{i_2})$ are always negative,whereas $(kb_{i_1}+la_{i_1})$ and $(kb_{i_3}+la_{i_3})$ are positive. We conclude the above by the following theorem.

\begin{thm}
\label{fre_mul_est_thm}
For arbitrary $N$ samplers, there exists a distributed Diophantine sampling scheme which can provide at least ${L}N(N-1)(N-2)/{\pi^2}$ virtual snapshots to estimate $\mathbb{E}_l \{{x}_{i_1}[kb_{i_1}+la_{i_1}]{x}^*_{i_2}[-(kb_{i_2}+la_{i_2})]{x}_{i_3}[kb_{i_3}+la_{i_3}]\}$ with time delay upper bounded by $2(N-1)(K+L)MT_s$. Here $M = \max_i\{M_i\}$
\end{thm}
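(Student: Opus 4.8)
The plan is to assemble the construction already outlined above into three parts: reduce the existence of a valid triple-sampler building block to a coprimality condition, count how many triples satisfy it, and bound the solution coefficients to control the time delay. First I would fix $M_i = i+\Gamma$ for an arbitrary positive integer $\Gamma$ and, for an ordered triple $i_3 < i_2 < i_1$ in $\{1,\dots,N\}$, reduce the system (\ref{wanted}) to the two equations (\ref{feiyang}) in the unknowns $a_{i_1},a_{i_3},b_{i_1},b_{i_3}$. By Lemma \ref{hardy} applied to the second equation of (\ref{feiyang}), a solution exists exactly when $\gcd(M_{i_1}-M_{i_2},\,M_{i_3}-M_{i_2}) = \gcd(i_1-i_2,\,i_2-i_3) = 1$, and the first equation is then solved homogeneously by $a_{i_1} = M_{i_2}-M_{i_3}$, $a_{i_3} = M_{i_1}-M_{i_2}$. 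So the whole statement reduces to a counting claim plus a magnitude bound.

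For the count, I would fix the middle index $i_2$ and examine the block $M_1-M_{i_2},\dots,M_N-M_{i_2}$ of $N$ consecutive integers contained in $(-N,N)$; only the primes $P_1,\dots,P_{\pi(N)}$ can appear as a common factor of two of its entries. Using that a given prime $p$ divides both of two integers with density $1/p^2$, I would lower bound the coprime fraction of pairs drawn from this block by $\prod_{j=1}^{\pi(N)}(1-1/P_j^2) > \prod_{j=1}^\infty (1-1/P_j^2) = 6/\pi^2$ via the Euler product for $\zeta(2)$. Summing over the choice of $i_2$ (equivalently, ranging over all $\binom{N}{3}$ triples) yields at least $\frac{6}{\pi^2}\binom{N}{3} = \frac{1}{\pi^2}N(N-1)(N-2)$ admissible triples, and since each admissible triple supplies, for each lag $k=1,\dots,K$, an estimator of $R_x[k]$ of the form (\ref{auto-est-3-main}) averaging the $L$ products indexed by $l=1,\dots,L$, the scheme provides at least $LN(N-1)(N-2)/\pi^2$ virtual snapshots. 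Unbiasedness (up to a negligible failure set) of each such estimator would follow exactly as in Theorem \ref{fre-sampling-thm} from the expansion (\ref{proof-main}) together with the hypothesis $a_{i_1}M_{i_1}(\omega_i-\omega_v) + a_{i_2}M_{i_2}(\omega_u-\omega_v) \ne 0$.

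For the magnitude bound I would use the explicit solution $a_{i_1} = M_{i_2}-M_{i_3} > 0$, $a_{i_3} = M_{i_1}-M_{i_2} > 0$ (each $\le N-1$) with $a_{i_2} = -(a_{i_1}+a_{i_3}) < 0$, together with $b_{i_1} = \langle(M_{i_1}-M_{i_2})^{-1}\rangle_{M_{i_2}-M_{i_3}}$ and $b_{i_3} = \langle(M_{i_3}-M_{i_2})^{-1}\rangle_{M_{i_1}-M_{i_2}}$ (nonnegative residues, each $< N-1$) with $b_{i_2} = -(b_{i_1}+b_{i_3}) < 0$; hence $|a_i|,|b_i| \le 2(N-1)$ in every coordinate. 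Because the $i_1$- and $i_3$-coefficients are nonnegative and positive respectively while the $i_2$-coefficients are negative, for every $k,l \ge 1$ the indices $kb_{i_1}+la_{i_1}$ and $kb_{i_3}+la_{i_3}$ are positive and $kb_{i_2}+la_{i_2}$ is negative, so the estimator in the statement is well-defined; and $|kb_i+la_i| \le k|b_i| + l|a_i| \le 2(N-1)(K+L)$. Multiplying the largest index magnitude by $\max_i M_i\,T_s = MT_s$ gives the claimed time delay $2(N-1)(K+L)MT_s$.

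The hard part will be making the counting step genuinely rigorous: ``the probability that two randomly picked numbers are coprime'' is not literally meaningful for a finite block of $N$ consecutive integers, so one must replace it with an inclusion--exclusion (Legendre sieve) over the primes $P_j \le N$ --- bounding the non-coprime pairs by roughly $\sum_j (N/P_j)^2$ up to rounding corrections --- and verify that the finite-range error does not erode the comparison with $\frac{6}{\pi^2}\binom{N}{3}$. One subtlety to check here is that the admissible difference pairs $(i_1-i_2,\,i_2-i_3)$ range over a triangular region (weighted by the number of valid placements of $i_3$) rather than a square, so the exact coprime fraction in that weighted region must be confirmed to sit at or above $6/\pi^2$. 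Everything else --- the algebra of (\ref{wanted})/(\ref{feiyang}), the sign bookkeeping, and the unbiasedness argument --- is routine and mirrors the three-sampler case.
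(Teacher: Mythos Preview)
Your proposal follows essentially the same route as the paper: fix $M_i=i+\Gamma$, reduce (\ref{wanted}) to (\ref{feiyang}), count admissible triples via the $6/\pi^2$ coprime-density heuristic over the consecutive block $\{M_j-M_{i_2}\}_j$, take the same explicit solutions $a_{i_1}=M_{i_2}-M_{i_3}$, $a_{i_3}=M_{i_1}-M_{i_2}$ and $b_{i_1},b_{i_3}$ as modular inverses, and read off the bound $|a_i|,|b_i|\le 2(N-1)$ together with the sign pattern. You are, in fact, more careful than the paper in flagging that the ``probability that two elements of a finite block are coprime'' step is heuristic and would need a sieve argument to be made rigorous; the paper simply asserts the inequality and moves on.
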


\begin{proof}
As shown above, by selecting the undersampling rates in a form $M_i = \Gamma+i$, with an arbitrary $\Gamma$, one can find at least $N(N-1)(N-2)/{\pi^2}$ many triplet subsets such that $(\ref{feiyang})$ has solutions, where especially the largest value of the solution is upper bounded by $2(N-1)$. Therefore, applying Theorem \ref{fre-sampling-thm}, for any triplet set $\{M_{i_1}, M_{i_2}, M_{i_3}\}$ such that $(\ref{feiyang})$ has solutions, with a time delay at most $2(N-1)(K+L)M$, one can find $L$ snapshots for each of $K$ consecutive lags. With totally at least $N(N-1)(N-2)/{\pi^2}$ many such triplet sets, the claims follow obviously. 
\end{proof}

Before the end of this section, we have two remarks. First, it is worthwhile to mention that the proposed undersampling rate selection is close to be optimal. For any given $N$ undersampling rates, the number of triple subsets such that (\ref{wanted}) has solutions, is upper bounded by
\begin{equation}
\binom{N}{3} - \binom{N_e}{3} - \binom{N-N_e}{3} < \frac{N(N-1)(N-2)}{8}
\end{equation}
where $N_e$ is the number of even numbers among $N$ integers. This is because if $M_1, M_2$ and $M_3$ are all even or odd integers, (\ref{feiyang}) is not solvable. 

Second, throughout this section, we give a third-order based Diophantine sampling construction but it can be straightforwardly generalized to any higher order, once the solutions to (\ref{dio_generic}) can be efficiently found. We leave a systematical study to our future work.

\section{Diophantine Array for Complex Waveform}
\label{sec:complex_doa}
\noindent In this section, we proceed to generalize the idea in Diophantine sampling to array configuration for DoA estimation. As mentioned earlier, another important application of co-prime sensing is to provide enhanced freedoms for DoA estimation  \cite{radar2,radar3,radar1}. To be self-contained, we still first review the co-prime array and related sparse array construction. 

\subsection{Review of Co-prime and Nested Array} 
\noindent A co-prime array consists of two uniform linear arrays (ULA) with $M_1-1$ and $2M_2$ sensors respectively. The positions of the $(M_1-1)$ sensors in the first ULA are given by $\{M_2m_2d, m_2=1,2,...,M_1-1 \}$ and the positions of the other $2M_2$ sensors are given by $\{ M_1m_1d, m_1=0,1,...,2M_2-1\}$. Here, $d={\lambda}/{2}$ and $\lambda$ corresponds to the wavelength. As indicated by Bézout theorem, the difference set $\pm \{m_1M_1 - m_2M_2\}$ will cover all consecutive integers from $-(M_1M_2-1)$ to $M_1M_2-1$, which can further provide $(2M_1M_2-1)$ DoF. In general, there are two primary concerns in DoA estimation. The first is the number of consecutive lags to estimate autocorrelation. As shown in \cite{coprime1}, both the resolution and DoF are proportional to the number of the longest consecutive lags generated by the difference set \footnote{More discussions on the holes of the co-prime array can be found in \cite{unique2015, unique2019}. }. Second, a larger spacing amongst sensors is always desirable to reduce coupling effect.

Besides co-prime array, another important coarray construction is {\em nested array}, which is also formed by two ULAs. As a special case of co-prime array, the underlying idea is still based on Bézout theorem that any number $k$ varying from $[-M_1M_2, M_1M_2-1]$ can be represented by $k = \pm (m_2M_2-m_1)$, where $m_2 \in \{0, 1, ... , M_1\}$ and $m_1 \in \{0,1,...,M_2\}.$ In general, given the same number of sensors, nested array can produce about one time more DoF compared to the co-prime array. However, nested array also bears heavier coupling effect, since it has a much denser sub ULA with sensors consecutively located as $\{0, d, ... ,M_2d\}$ \cite{nest}. Variants of Nested array with improvements can be found in \cite{2014nested, 2019nested} and references therein. However, as mentioned earlier, the least pairwise sensor distance in existing sparse array is always set to be no bigger than a half of wavelength, $d = \lambda/2$.

\subsection{A Third-order based Diophantine Array }
\noindent In the following, we propose the first construction of Diophantine array. Analogous to Section \ref{sec:complex_fre}, we consider estimating the spatial auto-correlation with the third-order difference.

Consider a DoA estimation of $D$ signal sources with a linear array of $N$ sensors, where the position of each sensor is represented by 
$$ \SS = \{p_1 \cdot d, p_2 \cdot d, ... , p_N \cdot d \} .$$ Let $a_l(\theta_i) = e^{j(2\pi/\lambda) d_l \sin\theta_i}$ be the element of the steering vector corresponding to direction $\theta_i$. Assuming $f_c$ to be the center frequency of the band of interest, for narrow-band sources centered at $f_i+f_c$, $i=1,2,...,D$, the received signal being down-converted to baseband at the $l$-{th} sensor is expressed by
\begin{equation}
x_l(t) = \sum_{i=1}^D a_l(\theta_i)s_i(t)e^{2\pi f_i t}
\end{equation}
where $s_i(t)$ is a narrow-band source. When a slow-fading channel is considered, we assume $s_i(t)$ as some constant $s_i$ in a coherence time block \cite{FSF}. With the similar idea we use in frequency estimation, when $n_1+n_3 = n_2$,
\begin{equation}
\label{three-chanel}
\mathbb{E}_{n_1,n_3}\big[ x_{l_1}[n_1]x^*_{l_2}[n_2]x_{l_3}[n_3] \big] = \sum_{i=1}^{D} s^3_ie^{j(2\pi/\lambda) (p_{l_1}-p_{l_2}+p_{l_3} )\sin\theta_i},
\end{equation}
where $ x_{l_i}[n]= x_{l_i}(n)+w_{l_i}(n)$, $l_1, l_2, l_3 \in \{1,2,...,N\}$ with some independent zero-mean noise $w_{l_i}(n)$. Similarly, $$\mathbb{E}_{n_1,n_3}[x^*_{l_1}(n_1)x_{l_2}(n_2)x^*_{l_3}(n_3)] = \sum_{i=1}^{D} s^3_ie^{j(2\pi/\lambda) (-p_{l_1}+p_{l_2}-p_{l_3} )\sin\theta_i}.$$ 
With the above observation, we can now describe the Diophantine array configuration based on the following theorem. 

\begin{thm}
\label{DoA}
Assign $M_1=p_3p_1$, $M_2=p_3p_2$ and $M_3=p_1p_2$ such that $p_1$, $p_2$ and $p_3$ are relatively co-prime positive integers. Let sensors be located at $\{m_1M_1d, m_2M_2d, m_3M_3d ~|~ m_1 \in \{0,1,...,2p_2-1\}, m_2 \in \{0,1,...,p_1-1\}, m_3 \in \{0,1,...,p_3-1 \}\}$, $d = \lambda/2$, which form three sub ULAs, respectively. Then, the difference set $\{\pm (m_1M_1-m_2M_2) \pm m_3M_3\}$ contains all consecutive numbers starting from $-p_1p_2p_3+1$ to $p_1p_2p_3-1$.
\end{thm}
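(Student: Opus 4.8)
The plan is to peel the three‑ULA difference set down to the already‑understood two‑ULA co‑prime difference set and then splice in the third ULA by a Chinese‑Remainder‑type argument. First I would exploit that $M_1=p_3p_1$ and $M_2=p_3p_2$ carry the common factor $p_3$: every element $m_1M_1-m_2M_2=p_3(m_1p_1-m_2p_2)$ is a multiple of $p_3$, and since $\gcd(p_1,p_2)=1$, the co-prime/Bézout result recalled in (\ref{Bzout}) (the case $r=0$), applied with $p_1,p_2$ in the roles of $M_1,M_2$, shows that as $m_1$ ranges over $\{0,\dots,2p_2-1\}$ and $m_2$ over $\{0,\dots,p_1-1\}$, the quantity $\pm(m_1M_1-m_2M_2)$ realizes exactly the set $\{p_3 t:\ |t|\le p_1p_2-1\}$. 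The third ULA contributes $\pm m_3M_3=\pm m_3 p_1p_2$ with $m_3\in\{0,\dots,p_3-1\}$.

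Next I would use the symmetry $k\mapsto -k$ of the target interval to reduce to hitting an arbitrary $k$ with $0\le k\le p_1p_2p_3-1$. Since $\gcd(p_1p_2,p_3)=1$, $p_1p_2$ is invertible mod $p_3$; set $c=\langle k(p_1p_2)^{-1}\rangle_{p_3}\in\{0,1,\dots,p_3-1\}$ and $v=cp_1p_2$, so that $v\equiv k\pmod{p_3}$ and hence both $k-v$ and $k-v+p_1p_2p_3$ are divisible by $p_3$. The two candidate representations to try are the $+$ branch $k=p_3 t+m_3p_1p_2$ with $m_3=c$ (so $p_3 t=k-v$) and the $-$ branch $k=p_3 t-m_3p_1p_2$ with $m_3=p_3-c$ (so $p_3 t=k-v+p_1p_2p_3$); in the second branch necessarily $c\neq 0$, for otherwise $v=0$ and the first branch already works.

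The step that closes the argument is a short range count. If $k-v\ge 0$, then $k-v$ is a non-negative multiple of $p_3$ with $k-v\le k\le p_1p_2p_3-1$, hence $k-v\le (p_1p_2-1)p_3$; dividing by $p_3$ gives $t\in\{0,\dots,p_1p_2-1\}$, and $m_3=c\in\{0,\dots,p_3-1\}$, so we land inside the difference set. If instead $k-v<0$, then from $v\le (p_3-1)p_1p_2$ we get $k-v+p_1p_2p_3\ge p_1p_2>0$, and this number is again a multiple of $p_3$ strictly below $p_1p_2p_3$, hence in $\{0,p_3,\dots,(p_1p_2-1)p_3\}$, so the corresponding $t$ is admissible and $m_3=p_3-c\in\{1,\dots,p_3-1\}$ is legal. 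Either way $k$, and by negation $-k$, is produced, so every integer in $[-p_1p_2p_3+1,\,p_1p_2p_3-1]$ is covered.

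I expect the only genuine obstacle to be recognizing that one should not approximate $k$ directly by some $\pm m_3p_1p_2$ (the naive triangle-inequality bound on $t$ overshoots the admissible range $|t|\le p_1p_2-1$ by roughly a factor of two), but instead keep both sign branches, whose values of $p_3 t$ are $k-v$ and $k-v+p_1p_2p_3$ — differing by exactly one full period $p_1p_2p_3$. Because these two numbers are multiples of $p_3$ and exactly one of them lies in $[0,p_1p_2p_3)$, that one automatically lies in $[0,(p_1p_2-1)p_3]$, which is precisely the range of $p_3 t$ delivered by the two ULAs. Everything else is routine bookkeeping on the index ranges of the three sub-ULAs.
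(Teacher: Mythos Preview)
Your proposal is correct and follows essentially the same two-step decomposition as the paper: first factor out $p_3$ from $m_1M_1-m_2M_2$ and invoke the standard co-prime/B\'ezout result on $(p_1,p_2)$ to obtain $\pm(m_1M_1-m_2M_2)=\{p_3 t:\,|t|\le p_1p_2-1\}$, then combine with $\pm m_3M_3=\pm m_3p_1p_2$ using $\gcd(p_3,p_1p_2)=1$. The only difference is cosmetic: where the paper simply says ``applying B\'ezout theorem again'' for the second step, you carry out an explicit CRT-style case split on the sign of $k-v$ that actually verifies the index ranges $|t|\le p_1p_2-1$ and $m_3\in\{0,\dots,p_3-1\}$ --- arguably a cleaner justification of the same step.
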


\begin{proof}
The proof is developed by applying Bézout theorem twice. We first consider the difference set between the first and the second sub ULA, where $m_1M_1-m_2M_2 = p_3(m_1p_1-m_2p_2)$, where $m_1 \in \{0,1,...,2p_2-1\}$ and $m_2 \in \{0,1,...,p_1-1\}$. Based on Bézout theorem, the difference set  $\pm\{m_1p_1-m_2p_2\}$ enumerates $ \{-p_1p_2+1,-p_1p_2+1, ... , p_1p_2-1\}$. Now, applying Bézout theorem again on the difference between the difference set of the first two sub ULAs and the third sub ULA, i.e., $\pm\{m_1M_1-m_2M_2\}$ and $\pm\{m_3M_3\}$, which essentially are the multiples of $p_3$ and $p_1p_2$, respectively, the triple difference set $\{\pm (m_1M_1-m_2M_2) \pm m_3M_3\}$ thus covers each integer at least starting from $-p_1p_2p_3+1$ to $p_1p_2p_3-1$.
\end{proof}

{Theorem \ref{DoA} shows that with $(p_1+2p_2+p_3-2)$ sensors, at least $(2p_1p_2p_3-1)$ DoF can be provided. Compared with conventional co-prime array based DoA estimation, given $(M_1+2M_2-1)$ sensors, the corresponding DoF is 
$$2M_1M_2-1 \leq (\frac{M_1+2M_2}{2})^2-1.$$ Moreover, the minimal distance between any two adjacent sensors is $\min \{q, p_1, p_2 \}$, since the positions of any two sensors share at least one common divisor from  $\{q, p_1, p_2 \}$. Thus, the minimal distance of sensor is $\Theta(Nd)$.}

On the other hand, to estimate the autocorrelation at lag $k$, we will find the snapshots at time $n_1, n_2$ and $n_3$, where $n_1+n_3=n_2$, from the three uniform subarrays, respectively. {Therefore, assuming that each sensor has collected $L$ samples, we can find ${L(L-1)}/{2}$ snapshots for autocorrelation estimation for each lag by the proposed strategy, instead of $L$ snapshots in co-prime arrays, though the computational complexity may slightly increase to construct those snapshots.} Thus, those additional samples can compensate for the heavier noise compromise in the third-order estimation (\ref{three-chanel}), which is less robust than the second-order based estimation in co-prime arrays.

\section{Background of $2q$-th-cumulant Based Array Processing}
\label{sec:2q}
\noindent From the start of this section, we present our second results of sparse array construction by connecting Diophantine equation and the $2q$-th-cumulant. 

\subsection{$2q$-th-cumulant Based DoA Estimation}
\noindent Consider a linear array with $N$ sensors, whose location set $\SS$ is denoted by 
	\begin{align}
	\SS&=\{p_1\cdot d,p_2\cdot d,\dots,p_N\cdot d\},
	\end{align}
with the spacing parameter $d={\lambda}/{2}$, and $\lambda$ is the wavelength of the signal. Suppose that there are $D$ independent narrowband signals $\{\bs_i(n)\mid i=1,2,\dots,D\}$ whose respective DoAs are $\theta_{[1:D]}=\{\theta_i\mid i=1,2,\dots,D\}$. Then, the array model for the received signal can be expressed by a vector,
	\begin{align*}
	\bx(n)&=\sum_{i=1}^D\ba(\theta_i)\bs_i(n)+\bw(n) =\bA(\theta)\bs(n)+\bw(n)
	\end{align*}
	where $\ba(\theta_i)$ is the steering vector with
	\[\ba(\theta_i)=[e^{-j\frac{2\pi p_1}{\lambda}\sin{\theta_i}},e^{-j\frac{2\pi p_2}{\lambda}\sin{\theta_i}},\dots,e^{-j\frac{2\pi p_N}{\lambda}\sin{\theta_i}}]^T\] $\bA(\theta)$ is a $N\times D$ steering matrix with $\bA(\theta)=[\ba(\theta_1),\ba(\theta_2),\dots,\ba(\theta_D)]$, the signal vector $\bs(n)=[\bs_1(n),\bs_2(n),\dots,\bs_D(n)]$, and $\bw(n)$ is independent white Gaussian. According to \cite{chev2q}, the $2q$-th order circular cumulant matrix, $C_{2q,x}(k)$, can be calculated as 
	\begin{align*}
		C_{2q,x}(k)&=\sum_{i=1}^Lc_{2q,\bs_i}[\ba(\theta_i)^{\otimes k}\otimes \ba(\theta_i)^{*\otimes q-k}]\\
		&\times[\ba(\theta_i)^{\otimes k}\otimes\ba(\theta_i)^{*\otimes(q-k)}]^H+\sigma_\bw^2I_{N^q}\delta(q-1)
	\end{align*}
	Here, $k$ serves as an index for the matrix arrangement with $0\leq k\leq q$ and $c_{2q, s_i}$ is the $2q$-th order circular autocumulant of a particular signal $\bm{s}_i$, i.e.,
	\[c_{2q, \bs_i}=\operatorname{Cum}\left[\bs_{i_1}(t), \cdots, \bs_{i_q}(t), \bs_{i_{q+1}}^{*}(t), \cdots, \bs_{i_{2q}}^{*}(t)\right]\] 
	Here,
	$\otimes$ represents the Kronecker product, $\{\cdot\}^*$ represents the conjugacy matrix, and $\{\cdot\}^H$ is the Hermitian transpose. $\{\cdot\}^{\otimes k}$ is defined as the $N^k\times1$ matrix where 
	\[\ba(\theta_i)^{\otimes k} =\underbrace{\ba\left(\theta_i\right) \otimes \ba\left(\theta_i\right)\otimes\cdots\otimes \ba\left(\theta_i\right)}_{k \text { times }}.\]

	This cumulant matrix is related to the $2q$-th order difference co-array. For the vector $\ba(\theta_i)^{\otimes k}$, its elements are expressed as 
	\[e^{j\frac{2\pi}{\lambda}(\sum_{i=1}^{l} p_{n_{i}}-\sum_{i=l+1}^{q} p_{n_{i}}) \sin\theta},\]
	where $\{n_1, n_2, ... , n_N\}$ is some permutation of $\{1, 2, ... , N\}$.  
	Each of these elements corresponds to a steering vector of a higher-order difference co-array that could estimate a signal's DoA. 
	
	\subsection{$2q$-th Symmetric  Difference set}
	\label{sec:higher-order-ruler}
	\noindent From the previous subsection, estimation with the $2q$-th-cumulant can also be described within the sparse ruler framework. Essentially, the difference set is defined by a weighted sum of $2q$ elements from $\SS$, of which a half weights are all set to 1 and the rest are -1, termed as {\em $2q$-th symmetric difference set} in the following. 
	
	\begin{defn}[$2q$-th symmetric difference set]
		The $2q$-th symmetric difference set $\CC^{2q}(\SS)$ of $\SS$ is defined as \[\CC^{2q}(\SS)=\{\sum_{i=1}^q p_{\pi(i)}-\sum_{i=q+1}^{2q} p_{\pi(i)} \mid \pi \in \Pi \}, \]
		where $\pi$ denotes a permutation across $[1:N]$ and $\Pi$ is the set of all $N!$ permutations.   
	\end{defn}
	
	In addition, to lighten the notations, for any two sets $\SS$ and $\SS'$, we use 
	$$\CC(\SS, \SS') = \{ p_i+p_j \mid p_i \in \SS, p_j \in \SS'\},$$
	to denote the {\em cross sum} of elements from $\SS$ and $\SS'$. In general, we can similarly define the {\em cross sum} of $m$ sets $\CC(\SS_1, \SS_2, ... ,\SS_m)$ by
	$$ \CC(\SS_1, \SS_2, ... ,\SS_m) = \{\sum_i p_i \mid p_i \in \SS_i\}. $$
	Especially, if one can write $\SS = \SS_1 \cup \SS_2 \cup ... \cup \SS_{2q}$, then 
	$$\CC(b_1\SS_1, b_2\SS_2, ... , b_{2q}\SS_{2q})  \subset  \CC^{2q}(\SS),$$
	for any selection of $b_{[1:2q]} \in \{1, -1\}^{2q}$ such that a half of $b_{[1:2q]}$ are $1$ and the rest half are $-1$. From a view of Diophantine equation (\ref{dio_generic}), the $2q$-th symmetric difference set puts a specific restriction on the solution whose Hamming weight is fixed to $2q$ and formed by $b_i$ with $q$ 1s and the other $q$ -1s. When we take each $\SS_i$ as a subarray of $\SS$, the $2q$-th symmetric difference set generated by $\SS$ can be made up of all possible combinations of those mutual sums with predetermined signs of its subarrays, and there are ${{2q}\choose{q}}$ of such combinations in total. 
	
	\subsection{Existing Works Reviews}
	\noindent The fourth-order cumulant based DoA estimation has been proposed and studied dated back to 1990s \cite{cardoso},\cite{chev4}, where higher-order moment statistics offer straightforward enhanced DoF and resolution. Such algorithms were later extended to the arbitrary {\em $2q$-th cumulant based MUSIC} \cite{chev2q}. By embedding of $q$ nested arrays, \cite{2q2011} gives the first concrete construction that provided $N$ sensors, one can produce $O(N^{2q})$ DoF compared to the best known $O(N^q)$ results in previous works. The underlying idea can be essentially summarized as the following. For $2q$ given numbers, $\{M_1, M_2, ... , M_{2q}\}$, one can construct a number representation system by taking $\{\prod_{i=1}^j M_i, j=0,1,...,2q-1\}$ as bases that 
	$$ k  = m_1 + m_2 M_1 + m_3 M_1M_2 + ... + m_{2q}\prod_{i=1}^{2q-1}M_i,$$
	for any $k \in [0,\prod_{i=1}^{2q}M_i)$ with $m_i \in [0,M_i)$. This gives a natural embedding of $2q$ subarray, each of which is formed by sensors at a spacing being multiple of $\prod_{i=1}^j M_i$. Most existing works follow this line:  \cite{exnested} and \cite{excoprime} proposed SAFOE-NA and SAFOE-CPA, which are extensions of nested arrays and co-prime arrays into the fourth-order cumulant structure. Recent improvements in their results include the EAS-NA-NA, EAS-NA-CPA in \cite{shift}, and an extension of the co-prime array in \cite{offgrid}. Various efforts have also been devoted to further generalize these structures to fit the arbitrary $2q$-th cumulant such as  \cite{SEML}, which improves the DoF in \cite{2q2011} by a factor $O(2^q)$. However, we have to point out that following this idea to simply embed nested or co-prime arrays in a recursion manner may encounter two main obstacles. First, with nested array as the building block, a dense subarrary is unavoidable. Second, it is noted that the cross sum amongst the $2q$ subarrays is exploited in a limited way. As mentioned before, ideally there will be ${{2q}\choose{q}}$ combinations, while only at most $2^q$ combinations are utilized in existing works \cite{2q2019}. To be formal, given $N$ sensors, the upper bound of such $2q$ symmetric difference is $\frac{N^{2q}}{(2q)!} \cdot \frac{(2q)!}{q!q!} = \frac{N^{2q}}{(q!)^2}.$ If we focus on the coefficient of the highest order, with $N^{2q}=(2q\bar{N})^{2q}$, we have 
\begin{equation}
    \label{upper_bound_2q}
\frac{(2q\bar{N})^{2q}}{(q!)^2} \approx \frac{(2q)^{2q}}{2\pi q(\frac{q}{e})^{2q}}\bar{N}^{2q} = \frac{(2e)^{2q}}{2\pi q}\bar{N}^{2q}.
\end{equation}

	\section{Shifted Array}
	\label{sec:shift-array}
    	\noindent In this section, we will present a key building block of our second results on Diophantine array based on $2q$-th-cumulant, termed as {\em shifted array}. One can apply this structure either in a nested or co-prime array by uniformly shift each subarray. A carefully designed shift, as shown later, can increase the spacing amongst sensors to reduce the mutual coupling, and on the other hand, it enables us to efficiently embed multiple subarrays to enhance the DoF.  In the following, we first take nested and co-prime array as two concrete examples and study their shifted versions in Section \ref{sec:shift-nest} and \ref{sec:shift-co}, respectively. Then, in Section \ref{sec:shift-co+nest}, we present a new framework by embedding shifted nested arrays into shifted co-prime arrays, which exploits the permutation invariant property of the cumulant method to achieve larger DoF.
	
	\subsection{Shifted Nested Array}
	\label{sec:shift-nest}
	
	\begin{defn}
		Let $\SS=\SS_1 \cup\SS_2$ be a linear array. Then, it is a shifted nested array if it has the form:
		\begin{align*}
		\SS_1&=\{(n_1N_2+\delta_1)d\mid n_1=0,1,2,\dots N_1\}\\
		\SS_2&=\{(n_2+\delta_2)d\mid n_2=0,1,2,\dots N_2\}
		\end{align*}
		We say that $\SS$ is shifted by factors of $\delta_1, \delta_2$.
	\end{defn}
	
	This shifted nested array has the following properties.
	
	\begin{lem}
	\label{lem:shift-nest}
		Let $\SS=\SS_1\cup\SS_2$ be a shifted nested array shifted by factors of $\delta_1,\delta_2$. Then
		\begin{align*}
		\CC(\SS_1,-\SS_2)&=\{\mu d\mid-N_2+\delta_1-\delta_2\leq\mu \leq N_1N_2+\delta_1-\delta_2\}\\
		&\supset\{\mu d \mid\delta_1-\delta_2\leq\mu\leq N_1N_2+\delta_1-\delta_2\}
		\\
		\CC(-\SS_1,\SS_2)&=\{\mu d\mid-N_1N_2-\delta_1+\delta_2\leq\mu\leq N_2-\delta_1+\delta_2\}\\
		&\supset\{\mu d \mid-N_1N_2-\delta_1+\delta_2\leq\mu\leq -\delta_1+\delta_2\}
		\\
		\CC(\SS_1,\SS_2)&=\{\mu d\mid\delta_1+\delta_2\leq\mu\leq N_1N_2+N_2+\delta_1+\delta_2\}\\
		&\supset\{\mu d\mid\delta_1+\delta_2\leq\mu\leq N_1N_2+\delta_1+\delta_2\}
		\end{align*}
	\end{lem}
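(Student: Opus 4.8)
The statement is purely about the structure of sumsets of arithmetic progressions, so the plan is a direct computation that I would organize around the algebraic identity defining each cross sum. Recall $\SS_1 = \{(n_1 N_2 + \delta_1)d \mid 0 \le n_1 \le N_1\}$ and $\SS_2 = \{(n_2 + \delta_2)d \mid 0 \le n_2 \le N_2\}$. First I would handle $\CC(\SS_1, -\SS_2)$: a generic element is $(n_1 N_2 + \delta_1)d - (n_2 + \delta_2)d = (n_1 N_2 - n_2 + \delta_1 - \delta_2)d$. The core claim is that as $(n_1,n_2)$ ranges over $[0,N_1]\times[0,N_2]$, the integer $n_1 N_2 - n_2$ takes every value in $[-N_2, N_1 N_2]$. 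This is the standard mixed-radix / base-$N_2$ covering argument underlying the ordinary nested array: writing any target $t \in [0, N_1 N_2]$ via division with remainder by $N_2$, and slightly extending on the low end to pick up $[-N_2,0)$ by taking $n_1=0$, $n_2 \in [0,N_2]$. I would state this as the key sublemma and then the shift by $\delta_1 - \delta_2$ just translates the whole interval, giving the first displayed line; the claimed $\supset$ is then immediate since $[\delta_1-\delta_2,\,N_1N_2+\delta_1-\delta_2]$ is a subinterval of $[-N_2+\delta_1-\delta_2,\,N_1N_2+\delta_1-\delta_2]$.

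Next I would obtain $\CC(-\SS_1,\SS_2)$ essentially for free: a generic element is $-(n_1N_2+\delta_1)d + (n_2+\delta_2)d = -(n_1N_2 - n_2)d - (\delta_1-\delta_2)d$, i.e., it is exactly the negation of the set computed for $\CC(\SS_1,-\SS_2)$. Negating the interval $[-N_2+\delta_1-\delta_2,\,N_1N_2+\delta_1-\delta_2]$ yields $[-N_1N_2-\delta_1+\delta_2,\,N_2-\delta_1+\delta_2]$, which is the second displayed line, and the $\supset$ again follows by taking the stated subinterval.

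For the third line, $\CC(\SS_1,\SS_2)$, the generic element is $(n_1N_2+\delta_1)d + (n_2+\delta_2)d = (n_1N_2 + n_2 + \delta_1+\delta_2)d$, so I need that $n_1N_2 + n_2$ covers $[0, N_1N_2+N_2]$ as $(n_1,n_2)$ ranges over $[0,N_1]\times[0,N_2]$. This is again a base-$N_2$ argument but now with the quotient digit $n_1$ allowed to go up to $N_1$ and the remainder digit $n_2$ up to $N_2$ (one more than a strict radix), so the ranges overlap and the full interval $[0,(N_1+1)N_2] = [0, N_1N_2+N_2]$ is hit with no gaps; translating by $\delta_1+\delta_2$ gives the displayed equality, and the $\supset$ is once more a subinterval statement.

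The only ``obstacle'' is bookkeeping: making sure the covering sublemma is stated once cleanly (ideally as: for integers $0\le n_1\le N_1$, $0\le n_2\le N_2$, the map $(n_1,n_2)\mapsto n_1N_2-n_2$ is onto $[-N_2,\,N_1N_2]$, and $(n_1,n_2)\mapsto n_1N_2+n_2$ is onto $[0,\,N_1N_2+N_2]$) and then invoked three times with the appropriate sign and additive shift, rather than re-deriving the interval arithmetic in each case. I would prove the sublemma by the division algorithm: given $t$ in the target range, set $n_1 = \lfloor t/N_2\rfloor$ (clamped to $[0,N_1]$) and choose $n_2$ to absorb the residual, checking the boundary cases $t<0$ and $t$ near the top endpoint explicitly. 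No deep input is needed — just Euclidean division — so the proof is short once the sublemma is isolated.
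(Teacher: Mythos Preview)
Your proposal is correct and follows essentially the same approach as the paper: both compute the generic element of each cross sum, reduce to the standard nested-array covering fact that $n_1N_2-n_2$ (with $0\le n_1\le N_1$, $0\le n_2\le N_2$) hits every integer in $[-N_2,N_1N_2]$, and then translate by the appropriate shift. The only cosmetic difference is that the paper handles $\CC(-\SS_1,\SS_2)$ and $\CC(\SS_1,\SS_2)$ by the reindexing $n_i\mapsto N_i-n_i$ so that all three cases reduce to the single expression $n_1N_2-n_2$, whereas you use negation for the second case and a separate (equally easy) covering claim for $n_1N_2+n_2$ in the third.
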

	\begin{proof}
	 Let $\SS_1(n_1)$ and $\SS_2(n_2)$ denote the $n_1$-th and $n_2$-th sensor's position of the subarray $\SS_1$ and $\SS_2$ respectively. Then,
	\begin{align*}
	\SS_1(n_1)-\SS_2(n_2) & =(n_1N_2+\delta_1)d-(n_2+\delta_2)d\\
	&=(n_1N_2-n_2)d+(\delta_1-\delta_2)d
	\end{align*}
	\begin{align*}
	-\SS_1&(N_1-n_1)+\SS_2(N_2-n_2)\\
	&=-(N_1N_2-n_1N_2+\delta_1)d+(N_2-n_2+\delta_2)d\\
	&=(n_1N_2-n_2)d+(-N_1N_2+N_2-\delta_1+\delta_2)d\\
	\SS_1&(n_1)+\SS_2(N_2-n_2)\\
	&=(n_1N_2+\delta_1)d+(N_2-n_2+\delta_2)d\\
	&=(n_1N_2-n_2)d+(\delta_1+\delta_2+N_2)d
	\end{align*}
	Applying the same idea for the regular nested array \cite{nest}, the claim follows straightforwardly.
	\end{proof}
	Lemma \ref{lem:shift-nest} characterizes the consecutive co-difference in a shifted nested array, which is helpful when we handle the embedding in later sections.

	\subsection{Shifted Co-prime Arrays}
	\label{sec:shift-co}
	
	\begin{defn}
		Let $\SS=\SS_1\cup\SS_2$ be a linear array. Then, it is a shifted co-prime array if it has the form:
		\begin{align*}
		\SS_1&=\{(n_1+\delta_1)M_1d\mid n_1=0,1,2,\dots N_1\}\\
		\SS_2&=\{(n_2+\delta_2)M_2d\mid n_2=0,1,2,\dots N_2\}
		\end{align*}
		We say that $\SS$ is shifted by factors of $\delta_1, \delta_2$.
	\end{defn}
	\noindent The following lemma characterizes a property of the consecutive second-order difference produced in shifted co-prime arrays. 
	
	\begin{lem}
	\label{lem:shifted-co}
		Let $\SS_1\cup\SS_2$, $\SS_3\cup\SS_4$ be two co-prime arrays shifted by integer distances $\delta_1$, $\delta_2$ and $\delta_1-kM_2$, $\delta_2-kM_1$ for some integer $k$, respectively. $M_1,M_2$ satisfies that $\gcd(M_1,M_2)=1$. They can be expressed by the following,
		\begin{align*}
		\SS_1 &= \{(n_1+\delta_1)M_1d\mid n_1=0,1,2,\dots N_1\}\\
		\SS_2 &= \{(n_2+\delta_2)M_2d\mid n_2=0,1,2,\dots N_2\}\\
		\SS_3 &= \{(n_1+\delta_1-kM_2)M_1d\mid n_1=0,1,2,\dots N_1\}\\
		\SS_4 &= \{(n_2+\delta_2- kM_1)M_2d\mid n_2=0,1,2,\dots N_2\}
		\end{align*}
		Then,\[\CC(\SS_1\cup\SS_2)=\CC(\SS_3\cup\SS_4)\]
	\end{lem}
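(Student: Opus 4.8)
The plan is to show that the whole second array $\SS_3\cup\SS_4$ is just a rigid translate of $\SS_1\cup\SS_2$, after which the equality of the difference co-arrays $\CC(\SS_1\cup\SS_2)=\{a-b\mid a,b\in\SS_1\cup\SS_2\}$ is immediate from translation invariance.

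First I would absorb the parameter $k$ into a single global displacement. Since $\SS_3=\{(n_1+\delta_1-kM_2)M_1d\mid n_1=0,1,\dots,N_1\}=\{(n_1+\delta_1)M_1d-kM_1M_2d\}$, we get $\SS_3=\SS_1-kM_1M_2d$, and likewise $\SS_4=\{(n_2+\delta_2-kM_1)M_2d\}=\{(n_2+\delta_2)M_2d-kM_1M_2d\}=\SS_2-kM_1M_2d$. The point to emphasize is that the shift ``$-kM_2$ in units of $M_1d$'' applied to $\SS_1$ and the shift ``$-kM_1$ in units of $M_2d$'' applied to $\SS_2$ yield the \emph{same} absolute displacement $-kM_1M_2d$; this is the only place the product structure $M_1\cdot M_2=M_2\cdot M_1$ is used, and in fact the coprimality hypothesis $\gcd(M_1,M_2)=1$ is not needed for this particular lemma (it is only part of the ambient ``co-prime array'' setup). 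Hence $\SS_3\cup\SS_4=(\SS_1\cup\SS_2)-kM_1M_2d$.

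Then I would invoke translation invariance of the difference co-array: for any finite set $\SS$ and constant $c$, $\{a-b\mid a,b\in\SS-c\}=\{(a-c)-(b-c)\mid a,b\in\SS\}=\{a-b\mid a,b\in\SS\}$, i.e. $\CC(\SS-c)=\CC(\SS)$. Applying this with $\SS=\SS_1\cup\SS_2$ and $c=kM_1M_2d$ gives $\CC(\SS_3\cup\SS_4)=\CC(\SS_1\cup\SS_2)$, which is the claim.

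There is essentially no hard step; the only thing worth double-checking is the bookkeeping in the first paragraph — that both subarrays really undergo the identical absolute shift — since the lemma is deliberately stated with the shifts expressed in the two different native spacings $M_1d$ and $M_2d$. As an alternative route one can skip the ``global translate'' observation and argue component-wise: the within-$\SS_1$ and within-$\SS_2$ differences are manifestly independent of $\delta_1,\delta_2$ and of $k$, while the cross differences $\SS_1-\SS_2$ are the translate of $\{(n_1M_1-n_2M_2)d\}$ by $(\delta_1M_1-\delta_2M_2)d$; for $\SS_3-\SS_4$ the two extra terms $-kM_2\cdot M_1d$ and $+kM_1\cdot M_2d$ cancel, leaving exactly the same translate. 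This yields the same conclusion at the cost of a short case split over which pair of subarrays a difference comes from.
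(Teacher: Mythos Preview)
Your proof is correct. Your primary argument --- observing that $\SS_3=\SS_1-kM_1M_2d$ and $\SS_4=\SS_2-kM_1M_2d$, hence $\SS_3\cup\SS_4$ is a single rigid translate of $\SS_1\cup\SS_2$, and then invoking translation invariance of the difference set --- is a cleaner repackaging of what the paper does. The paper instead takes exactly your ``alternative route'': it computes the cross difference directly, writing out $\SS_3(n_1)-\SS_4(n_2)$, expanding, and watching the two extra terms $-kM_2M_1d$ and $+kM_1M_2d$ cancel to recover $\SS_1(n_1)-\SS_2(n_2)$; by symmetry this handles both $\CC(\SS_1,-\SS_2)$ and $\CC(-\SS_1,\SS_2)$, and the self-differences are unchanged since each $\SS_i$ is individually a translate. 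Your global-translate observation makes the translation invariance explicit once and avoids the case split, and it also makes transparent (as you note) that $\gcd(M_1,M_2)=1$ plays no role in this lemma. Either way, the entire content is the single identity $kM_2\cdot M_1=kM_1\cdot M_2$.
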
 
	
	\begin{proof}
	 Because of the symmetry between $\CC(\SS_1,-\SS_2)$ and $\CC(\SS_2,-\SS_1)$ (and similarly for $\CC(\SS_3,-\SS_4)$ and $\CC(\SS_3,-\SS_4)$), we only consider one of them without loss of generality.
	\begin{align*}
	& \SS_3(n_1)-\SS_4(n_2)\\
	&=(n_1+\delta_1-kM_2)M_1d-(n_2+\delta_2-kM_1)M_2d\\
	&= (n_1+\delta_1)M_1d-(n_2+\delta_2)M_2d-kM_2M_1d+ kM_1M_2d\\
	&=(n_1+\delta_1)M_1d-(n_2+\delta_2)M_2d\\
	&=\SS_1(n_1)-\SS_2(n_2)
	\end{align*}
	\begin{align*}
	&\CC(\SS_1,-\SS_2)\\
	&=\{\SS_1(n_1)-\SS_2(n_2)\mid n_1=1,2,\dots N_1, n_2=1,2,\dots N_2)\}\\
	&=\{\SS_3(n_1)-\SS_4(n_2)\mid n_1=1,2,\dots N_1, n_2=1,2,\dots N_2\}\\
	&= \CC(\SS_3,-\SS_4)
	\end{align*}
	This proves the lemma.
	\end{proof}
	
	\subsection{Embedding of Shifted Nested and Co-prime Arrays}	
	\label{sec:shift-co+nest}
    \noindent In this section, we give a high-level picture of embedding shifted nested arrays into a shifted co-prime array to produce $2q$-th symmetric difference. First, we split an array $\SS$ into two sets denoted by $\mathbb{M}_1$ and $\mathbb{M}_2$. We use $\LL_{1}$ and $\LL_{2}$ to denote the {\em $q$-th order difference set} of $\mathbb{M}_1$ and $\mathbb{M}_2$, respectively. To be formal, $\LL_l$ is in a form, 
     \[  \{\sum_{i=1}^j p_{\pi(i)}-\sum_{i=j+1}^{q} p_{\pi(i)} \mid p_{\pi(i)} \in \mathbb{M}_l, j\in\{0,1,\dots q\} ,\pi \in \Pi \}, \] 
    for $l = 1,2$, which include all the possible combinations of $q$-element sum or difference from $\mathbb{M}_1$ and and $\mathbb{M}_2$, respectively.  As shown later, shifted nested array is a good candidate to construct those $q$-th order difference desired. 
	
	Second, we attempt to transform $\LL_1$ and $\LL_2$ to fit a co-prime array structure. We multiply each element in $\mathbb{M}_1$ by $M_1$ and those in $\mathbb{M}_2$ by $M_2$, with the greatest common divisor $\gcd(M_1,M_2)=1$. $\mathbb{M}_1$ and $\mathbb{M}_2$ could be separately formed by multiple nested subarrays and for each pair of subarrays $(\mathbb{M}_{1i},\mathbb{M}_{2j})$ with $\mathbb{M}_{1i}\in\mathbb{M}_1$ and $\mathbb{M}_{2j}\in\mathbb{M}_2$, they could be properly shifted according to Lemma \ref{lem:shift-nest}. 
	
	Finally, we set out to aggregate multiple co-prime arrays formed by each $\mathbb{M}_{1i}\cup \mathbb{M}_{2j}$. By Lemma \ref{lem:shifted-co}, we could then further shift such co-prime structures to maximize the consecutive $2q$-th symmetric difference set. To this end, we introduce the following lemma.
	
	\begin{lem}
	\label{lem:shift-product-coprime}
		Let $\SS_1$ and $\SS_2$ be subarrays of a co-prime array, where
		\begin{align*}
		\SS_1 &= \{n_1M_1d\mid n_1 = 0,1,2,\dots,N_1\}\\
		\SS_2 &= \{n_2M_2d\mid n_2 = \floor{\frac{-M_1}{2}}, \floor{\frac{-M_1}{2}}+1,\dots,0,\dots,\floor{\frac{M_1}{2}}\}
		\end{align*}
		with $N_1\geq M_2$. Then, the cross difference $\CC(\SS_1, \SS_2)$ generated is in a form,
		\[\{\mu d \mid -(N_1+1)M_1+\floor{\frac{M_1}{2}}M_2\leq\mu\leq(N_1+1)M_1-\floor{\frac{M_1}{2}}M_2\}\]
	\end{lem}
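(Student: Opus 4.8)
The plan is to read the conclusion as a statement about the longest consecutive block of lags generated by the pair $(\SS_1,\SS_2)$. Since $\SS_1,\SS_2$ enter a $2q$-th symmetric difference, both the cross sum $\CC(\SS_1,\SS_2)$ and its reflection $-\CC(\SS_1,\SS_2)$ are available, so it suffices to show that for every integer $\mu$ with $0\le\mu\le (N_1+1)M_1-\lfloor M_1/2\rfloor M_2$ at least one of $\mu,-\mu$ can be written as $n_1M_1+n_2M_2$ with $n_1\in\{0,1,\dots,N_1\}$ and $n_2\in\{\lfloor -M_1/2\rfloor,\dots,\lfloor M_1/2\rfloor\}$. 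This is a linear-Diophantine question of exactly the type Lemma~\ref{hardy} governs, and the structural input I would record first is: because $\gcd(M_1,M_2)=1$, the map $n_2\mapsto\langle n_2M_2\rangle_{M_1}$ is a bijection of $\ZZ/M_1\ZZ$, and the index window of $\SS_2$ consists of $M_1$ (in fact $M_1+1$) consecutive integers, hence carries a complete residue system modulo $M_1$; thus the residue of $\mu$ modulo $M_1$ forces a value of $n_2$ up to adding multiples of $M_1$ that stay in the window.

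For the main step, fix $\mu$ in the block. Among the one or two admissible indices realizing $\langle n_2M_2\rangle_{M_1}=\langle\mu\rangle_{M_1}$, or the corresponding indices for $-\mu$, I would pick one that is $\le 0$; the elementary fact needed is that for each residue class, either that class or its negative has a non-positive representative inside $\{\lfloor -M_1/2\rfloor,\dots,\lfloor M_1/2\rfloor\}$, so after possibly swapping $\mu$ for $-\mu$ we may take $n_2\le 0$. Then $n_1:=(\pm\mu-n_2M_2)/M_1$ is an integer by the residue choice, and $n_1\ge 0$ because $-n_2M_2\ge 0$ and $\pm\mu\ge 0$. The bound $n_1\le N_1$ is where the hypothesis $N_1\ge M_2$ is used: from $n_2\ge\lfloor -M_1/2\rfloor$ one gets $n_1M_1=\pm\mu-n_2M_2\le (N_1+1)M_1-\lfloor M_1/2\rfloor M_2+\lceil M_1/2\rceil M_2$, and in the borderline residue class where this bound would leave $n_1$ at $N_1+1$, one replaces $n_2$ by the other window representative $n_2+M_1=\lfloor M_1/2\rfloor$, which drops $n_1$ by exactly $M_2$ and hence, since $N_1\ge M_2$, lands it in $\{0,\dots,N_1\}$. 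Running $\mu$ over the whole block and using that $-\CC(\SS_1,\SS_2)$ is also present yields the consecutive interval $\{\mu d\mid |\mu|\le (N_1+1)M_1-\lfloor M_1/2\rfloor M_2\}$.

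The real obstacle is the endpoint bookkeeping. Controlling $n_1$ near $N_1$ (and, by symmetry, near $0$) forces one to track precisely which residue classes have a non-positive representative in the window and which representative is doubly covered, and this differs for $M_1$ even versus $M_1$ odd and is sensitive to whether $M_2<M_1$ or $M_2>M_1$; away from the two ends the construction is a one-line Bézout argument. I would therefore isolate the delicate part as a small boundary claim — that any $\mu$ lying within $\lceil M_1/2\rceil M_2$ of either endpoint still has an admissible representation — prove it by the representative-switching step above using $N_1\ge M_2$, and reduce everything else to the generic residue choice.
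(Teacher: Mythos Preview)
Your approach is essentially the same as the paper's: both fix $\mu$, use B\'ezout's general solution $(x_0+kM_2,\,y_0+kM_1)$ to land the $\SS_2$-index inside the symmetric window $[\lfloor -M_1/2\rfloor,\lfloor M_1/2\rfloor]$, bound the $\SS_1$-index by the assumed range of $\mu$, and then flip signs (of $(x_1,y_1)$ in the paper, of $\mu$ in your version) to force the $\SS_1$-index nonnegative. The paper is terser and does not spell out the endpoint bookkeeping or the parity distinction you flag; your representative-switching step using $N_1\ge M_2$ is exactly the patch the paper's argument would need at the boundary, so your write-up is in fact a more careful execution of the same proof.
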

	
	\begin{proof}
	Without lost of generality, we only consider all the positive lags generated by $$\CC_+(\SS_1, -\SS_2) = \{\mid n_1M_1-n_2M_2\mid \}.$$ By the Bézout theorem, we can find a solution $(x_0, y_0)$ to the equation $\mid x_0M_1-y_0M_2\mid=\mu.$
	Then, this equation has a general solution \[\begin{cases}x=x_0+kM_2\\y=y_0+kM_1\end{cases}\text{for }k\in\ZZ\]
	Therefore, we can find a pair of solution $(x_1, y_1)$ such that $y_1\in[\floor{\frac{-M_1}{2}},\floor{\frac{M_1}{2}}]$. Here, we have $x_1 < N_1+1$ since or else, 
	\begin{align*}^{}
	\mid x_1M_1-y_1M_2\mid &= x_1M_1-y_1M_2 >(N_1+1)M_1-\floor{\frac{-M_1}{2}}M_2
	\end{align*}
	Let $n_1=x_1$ and $n_2=y_1$ if $x_1\geq0$, and $n_1=-x_1$ and $n_2=-y_1$ otherwise. Thus, we have found proper solutions of $\mid n_1p-n_2q\mid=\mu$ for any $\mu\in\CC_+(\SS_1, \SS_2)$.
	\end{proof}
	
	\section{Lower-order Shift Array Models}
		\subsection{The Fourth Order Array}
		\label{sec:4-order}
	\noindent We construct a fourth order difference co-array. Choose two co-prime integers $M_1, M_2$, $\gcd(M_1, M_2) = 1$, such that $M_1\leq N_1N_2$, $M_2 \leq N_3N_4$, with parameters $N_{[1:4]}$ which will be specified below. Consider an array in a form $\SS = \mathbb{M}_{11}\cup \mathbb{M}_{12}\cup \mathbb{M}_{21}\cup \mathbb{M}_{22}$, where
	\begin{align*}
	\mathbb{M}_{11}&=\{(n_1N_2+M_2)M_1d\mid n_1=0,1,2,\dots,N_1\}
	\\
	\mathbb{M}_{12}&=\{(n_2+\floor{\frac{M_2}{2}})M_1d\mid n_2=0,1,2,\dots, N_2\}
	\\
	\mathbb{M}_{21}&=\{(n_3N_4-\floor{\frac{M_1}{2}})M_2d\mid n_3=0,1,2,\dots,N_3\}
	\\
	\mathbb{M}_{22}&=\{(n_4-\floor{\frac{M_1}{2}})M_2d \mid n_4=0,1,2,\dots,N_4\}
	\end{align*}
	Then, the set of consecutive lags \[\CC^4(\SS)=\{\mu d \mid-M_{max}^4\leq\mu\leq M_{max}^4,\mu\in\ZZ\}\]
	where
	\begin{align*}
	M_{max}^4&=\begin{cases}\floor{\frac{5}{2}M_1M_2}, \text{ when $M_2$ is even}\\\\\floor{\frac{5}{2}M_1M_2}-M_2,  \text{ when $M_2$ is odd}\end{cases} \\
	& \leq\floor{\frac{5}{2}N_1N_2N_3N_4}
	\end{align*}
	
	We prove the above claim in the following. Due to the symmetry of the $2q$-th symmetric difference, we still only consider positive lags. First, we look at the combinations of the sum and the difference sets of $\mathbb{M}_{11}$ and $\mathbb{M}_{12}$ and $\mathbb{M}_{21}$ and $\mathbb{M}_{22}$ as sets of shifted nested arrays. Applying Lemma \ref{lem:shift-nest}, we have 
	$$
	\CC(\mathbb{M}_{11}, -\mathbb{M}_{12})=\{\mu M_1d \mid c_1 \leq\mu\leq N_1N_2+c_1 \}, $$
	$$\CC(\mathbb{M}_{11}, \mathbb{M}_{12})=\{\mu M_1d \mid c_2 \leq\mu\leq N_1N_2+c_2\}, $$
	$$ \CC(\mathbb{M}_{21}, -\mathbb{M}_{22})=\{\mu M_2d \mid c_3 \leq\mu \leq (N_3+1)N_4+c_3 \},$$
	$$ \CC(\mathbb{M}_{21}, \mathbb{M}_{22})=\{\mu M_2d \mid c_4 \leq\mu \leq (N_3+1)N_4+c_4 \},$$ where 
	\begin{equation}
	\begin{aligned}
	c_1
	&=\delta(\mathbb{M}_{11})-\delta(\mathbb{M}_{12})
	=\floor{\frac{-M_2}{2}}
	\\
	c_2
	&=\delta(\mathbb{M}_{11})+\delta(\mathbb{M}_{11})
	=\floor{\frac{3M_2}{2}}
	\\
	c_3
	&=\delta(\mathbb{M}_{21})-\delta(\mathbb{M}_{22})
	=0
	\\
	c_4
	&=\delta(\mathbb{M}_{21})+\delta(\mathbb{M}_{22})
	=-2\floor{\frac{M_2}{2}}
	\end{aligned}
	\label{shift-factor}
  	\end{equation}
    Next, still due to symmetry, we consider the following three virtual co-prime arrays which produce  fourth order symmetric difference in a form, $\CC(\MM_{11},-\MM_{12},\MM_{21},-\MM_{22})$, $\CC(\MM_{11},-\MM_{12},-\MM_{21},\MM_{22})$, and $\CC(\MM_{11},\MM_{12},-\MM_{21},-\MM_{22})$, respectively. To be clear, let $\SS_1 = \CC(\MM_{11},-\MM_{12})$, $\SS_2 = \CC(\MM_{11},\MM_{12})$, $\SS_3 = \CC(-\MM_{21},-\MM_{22})$, and $\SS_4 = \CC(\MM_{22},\MM_{22})$. Then, virtually, any $(\SS_i,\SS_j)$ for $i\in \{1,2\}$ and $j \in \{3,4\}$ forms a co-prime array and (\ref{shift-factor}) characterizes their respective shift factor. Since $M_1\leq N_1N_2$ and $M_2\leq N_4N_5$, by Lemma \ref{lem:shifted-co}, those three virtual co-prime arrays indeed together produce an extended coprime array in a form $\AA\cup \BB$, where 
	\begin{align*}
	\AA&=\{n_1M_1d\mid n_1=\floor{\frac{-M_2}{2}},\floor{\frac{-M_2}{2}}+1,\dots,\floor{\frac{M_2}{2}}\}\\
	\BB&=\{n_2M_2d\mid n_2=-2M_2 \text{ or } -2M_2+1,\dots,0\}
	\end{align*}
	
	By Lemma \ref{lem:shift-product-coprime},
	therefore, the number of total positive lags generated by $\AA\cup \BB$ is at least \[ \{\mu d\mid 0\leq\mu\leq \floor{\frac{5M_1M_2}{2}}\text{ or } \floor{\frac{5M_1M_2}{2}}-M_2\}.\]
	The equality is reached if and only if $M_1=N_1N_2$, $M_2=N_3N_4$ and when $q$ is even. The claim then follows clearly.
	
	The fourth order shifted array proposed successfully uses all ${4\choose2}$ permutations of the signs in the difference co-array. Overall, it yields a DoF bound of $O(\frac{5}{2}n^4)$ for its consecutive lag generation.
	
		\subsection{The Sixth Order Array}
		\label{sec:6-order}
        \noindent We construct a sixth order difference co-array as follows. Choose integers $M_1, M_2$ such that $\gcd(M_1, M_2) = 1$, $M_1\leq N_1N_2N_3$ and $M_2< N_4N_5N_6$. Consider the set of sensors positions, $\SS = \mathbb{M}_{11}\cup \mathbb{M}_{12}\cup \mathbb{M}_{13}\cup \mathbb{M}_{21}\cup \mathbb{M}_{22}\cup \mathbb{M}_{23}$ where
	\begin{align*}
	\mathbb{M}_{11}&=\{(n_1N_2N_3)M_1d\mid n_1=0,1,2,\dots,N_1\}
	\\
	\mathbb{M}_{12}&=\{(n_2N_3+M_2)M_1d\mid n_2=0,1,2,\dots, N_2\}
	\\
	\mathbb{M}_{13}&=\{(n_3+\floor{\frac{3M_2}{2}})M_1d\mid n_3=0,1,2,\dots,N_3\}
	\\
	\mathbb{M}_{21}&=\{(n_4N_5N_6-\floor{\frac{5M_1}{2}})M_2d\mid n_4=0,1,2,\dots,N_4\}
	\\
	\mathbb{M}_{22}&=\{(n_5N_6-\floor{\frac{7M_1}{2}})M_2d \mid n_5=0,1,2,\dots,N_5\}
	\\
	\mathbb{M}_{23}&=\{(n_6-5M_1)M_2d\mid n_6=0,1,2,\dots, N_6\}
	\end{align*}
	Then, the set of consecutive lags \[\CC^6(\SS)=\{\mu d \mid-M_{max}^6\leq\mu\leq M_{max}^6,\mu\in\ZZ\}\] where \[M_{max}^6=\floor{\frac{17}{2}M_1M_2}\leq\floor{\frac{17}{2}N_1N_2N_3N_4N_5N_6}\]
	\begin{proof}
	  Due to the symmetry, we still only consider the positive lags. First, we look at the following combinations of the sum and the difference sets of $\mathbb{M}_{12}$ and $\mathbb{M}_{13}$ and $\mathbb{M}_{22}$ and $\mathbb{M}_{23}$ as sets of shifted nested arrays. Applying Lemma 1, we get
	\begin{align*}
	\CC(\mathbb{M}_{12}, -\mathbb{M}_{13})&=\{\mu M_1\mid c^2_1 \leq\mu\leq N_2N_3+c^2_1,\mu\in\ZZ\}
	\\
	\CC(\mathbb{M}_{12}, \mathbb{M}_{13})&=\{\mu M_1\mid c^2_2 \leq\mu\leq N_2N_3+c^2_2,  \mu\in\ZZ\}
	\\
	\CC(\mathbb{M}_{22}, -\mathbb{M}_{23})&=\{\mu M_2\mid c^2_3 \leq\mu\leq N_5N_6+c^2_3,\mu\in\ZZ\}
	\\
	\CC(\mathbb{M}_{22}, \mathbb{M}_{23})&=\{\mu M_2\mid c^2_4 \leq\mu\leq N_5N_6+c^2_4,\mu\in\ZZ\}
	\end{align*} where
	$c^2_1 =\delta(\mathbb{M}_{12})-\delta(-\mathbb{M}_{13}) =\floor{\frac{-M_2}{2}}$, 
	$c^2_2 =\delta(\mathbb{M}_{12})+\delta(\mathbb{M}_{13}) =\floor{\frac{5q}{2}}$, 
	$c^2_3 = \delta(\mathbb{M}_{22})-\delta(-\mathbb{M}_{23}) =\floor{\frac{3M_1}{2}}$, and 
	$c^2_4 = \delta(\mathbb{M}_{22})+\delta(\mathbb{M}_{23}) =\floor{\frac{-17M_1}{2}}$.
	
	Next, we further embed one more additional subarray into the above cross sums. Applying Lemma \ref{lem:shift-nest}, we have 
	\begin{align*}
	\SS_1 &= \CC(-\mathbb{M}_{11},\mathbb{M}_{12},\mathbb{M}_{13})\\&=\{\mu M_1d \mid c^3_1 \leq\mu\leq N_1N_2N_3+c^3_1\}
	\\
	\SS_2 &= \CC(\mathbb{M}_{11},-\mathbb{M}_{12},\mathbb{M}_{13})\\&=\{\mu M_1d \mid c^3_2 \leq\mu\leq N_1N_2N_3+c^3_2\}
	\\
	\SS_3 &= \CC(\mathbb{M}_{11},\mathbb{M}_{12},-\mathbb{M}_{13})\\&=\{\mu M_1d \mid c^3_3\leq\mu\leq N_1N_2N_3+c^3_3\}
	\\
	\SS_4 &= \CC(-\mathbb{M}_{21},\mathbb{M}_{22},\mathbb{M}_{23})\\&=\{\mu M_2d \mid c^3_4 \leq\mu\leq N_4N_5N_6+c^3_4\}
	\\
	\SS_5 &= \CC(\mathbb{M}_{21},-\mathbb{M}_{22},\mathbb{M}_{23})\\&=\{\mu M_2d \mid c^3_5 \leq\mu\leq N_4N_5N_6+c^3_5 \}
	\\
	\SS_6 &= \CC(\mathbb{M}_{21},\mathbb{M}_{22},-\mathbb{M}_{23})\\&=\{\mu M_2d \mid c^3_6 \leq\mu\leq N_4N_5N_6+c^3_6\}
	\end{align*}
	where
	\begin{equation}
	\begin{aligned}
	c^3_1 & =-N_1N_2N_3-\delta(\mathbb{M}_{11})+c^2_2 =-N_1N_2N_3+\floor{\frac{5M_2}{2}}
	\\
	c^3_2 &=\delta(\mathbb{M}_{11})-c^2_1 = 0-\floor{\frac{-M_2}{2}} =\ceil{\frac{M_2}{2}}
	\\
	c^3_3 &=\delta(\mathbb{M}_{11})+c^2_1
	=0+\floor{\frac{-M_2}{2}} =\floor{\frac{-M_2}{2}}
	\\
	c^3_4 &=-N_4N_5N_6-\delta(\mathbb{M}_{21})+c^2_4 =-N_4N_5N_6-6M_1
	\\
	c^3_5 &=\delta({\mathbb{M}_{21}})-c^2_3 =-4M_1
	\\
	c^3_6 &=\delta(\mathbb{M}_{21})+c^2_3
	=\floor{\frac{-5M_1}{2}}+\floor{\frac{3M_1}{2}} =-M_1.
	\end{aligned} 
	\end{equation}
	Finally, we construct the sixth order symmetric difference by considering the cross sum amongst $\SS_{[1:6]}$. For notion simplicity, we denote $\CC_{CP}(\delta_a,\delta_b)$ as the lags of a coprime array generated by two subarrays, $\SS_a$ and $\SS_b$ where $\SS_a$ and $\SS_b$ are two coprime subarrays shifted by factors $\delta_a$, $\delta_b$ respectively. If we consider the cross sum between $\SS_1$ and $-\SS_4$, it is not hard to find that $\CC(\SS_1,-\SS_4)=\CC_{CP}(-N_1N_2N_3+\floor{\frac{5M_2}{2}},-N_4N_5N_6-6M_1).$
	Here, $\CC_{CP}(-N_1N_2N_3+\floor{\frac{5M_2}{2}},-N_4N_5N_6-6M_1)$ means virtually $\SS_1$ and $-\SS_4$, as two sub arrays, form a co-prime array shifted by $-N_1N_2N_3+\floor{\frac{5M_2}{2}}$ and $-N_4N_5N_6-6M_1$, respectively. Now, by Lemma \ref{lem:shifted-co}, we have the following observations,

	\begin{align*}
	\CC(\SS_1,-\SS_4)&=\CC_{CP}(-N_1N_2N_3+\floor{\frac{5M_2}{2}},-N_4N_5N_6-6M_1)\\
	&=\CC_{CP}(-N_1N_2N_3+\floor{\frac{M_2}{2}},-N_4N_5N_6-8M_1)
	\\
	\CC(\SS_1,-\SS_5)&=\CC_{CP}(-N_1N_2N_3+\floor{\frac{5q}{2}},-4M_1)\\
	&=\CC_{CP}(-N_1N_2N_3+\floor{\frac{M_2}{2}},-6M_1)
	\\
	\CC(\SS_1,-\SS_6)&=\CC_{CP}(-N_1N_2N_3+\floor{\frac{5M_2}{2}},-M_1)\\
	&=\CC_{CP}(-N_1N_2N_3+\floor{\frac{M_2}{2}},-3M_1)
	\end{align*}
	\begin{equation}
	\label{6-order-3-sub}
	\begin{aligned}
	\CC(\SS_2,-\SS_4)&=\CC_{CP}(\ceil{\frac{M_2}{2}},-N_4N_5N_6-6M_1)\\
	&=\CC_{CP}(\ceil{\frac{-M_2}{2}},-N_4N_5N_6-7M_1)\\
	\CC(\SS_2,-\SS_5)&=\CC_{CP}(\ceil{\frac{M_2}{2}},-4M_1)=\CC_{CP}(\ceil{\frac{-M_2}{2}},-5M_1)
	\\
	\CC(\SS_2,-\SS_6)&=\CC_{CP}(\ceil{\frac{M_2}{2}},-M_1)=\CC_{CP}(\ceil{\frac{-M_2}{2}},-2M_1)
	\\
	\CC(\SS_3,-\SS_4)&=\CC_{CP}(\floor{\frac{-M_2}{2}},-N_4N_5N_6-6M_1)
	\\
	\CC(\SS_3,-\SS_5)&=\CC_{CP}(\floor{\frac{-M_2}{2}},-4M_1)\\
	\CC(\SS_3,-\SS_6)&=\CC_{CP}(\floor{\frac{-M_2}{2}},-M_1)
	\end{aligned}
	\end{equation}
	Since $M_1 \leq N_1N_2N_3$ and $M_2 \leq N_4N_5N_6$, the nine sets produced in (\ref{6-order-3-sub}) together virtually form an extended co-prime array. The lags generate by all nine arrays is contained in the lags generated by $A\cup B$, where 
	\begin{align*}
	A&=\{n_1M_1d\mid n_1=\floor{\frac{-M_2}{2}},\floor{\frac{-M_2}{2}}+1,\dots,\floor{\frac{M_2}{2}}\}\\
	B&=\{n_2M_2d\mid n_2=-9M_1,-9M_1+1,\dots,0\}.
	\end{align*}
	
	By Lemma 3,
	Therefore, the number of total positive lags generated by $A\cup B$ is \[\CC(A\cup B) = \{\mu\mid 0\leq\mu\leq \floor{\frac{17M_1q}{2}}\}.\] Therefore, 
	\begin{align*}
	M_{max}^6&=\floor{\frac{17}{2}M_1M_2} \leq\floor{\frac{17}{2}N_1N_2N_3N_4N_5N_6}
	\end{align*}
	The equality is achieved if and only if $M_1=N_1N_2N_3$, $M_2=N_4N_5N_6$.
	\end{proof}
	In general, the sixth order NECP reaches an efficiency of ${17(N/6)^6}/{2}$ for lag generation. It utilizes $9$ out of $ {{6\choose2}}/{2}=15$ number of sign combinations.

	\section{$2q$-th Order Shifted Diophatine Array}
	\label{sec:2q-general}
	\noindent We first introduce the proposed layering technique. For any two arrays which can generate consecutive integer lags, we can manipulate them to form a higher-order difference co-array as follows. Suppose we have two arrays $\SS_1$ and $\SS_2$ in a form, $\SS_1=\{\alpha_1\cdot d,\alpha_2\cdot d,\dots,\alpha_{N_1}\cdot d\}$ and $\SS_2=\{\beta_1\cdot d,\beta_2\cdot d,\dots,\beta_{N_2}\cdot d\}$,   which produce $2q_1$-th order and $2q_2$-th order consecutive symmetric difference, respectively
	\begin{align}
	\label{2_q_1}
	\CC^{2q_1}(\SS_1)&=\{-\mu_1\leq\mu\leq\mu_1\}
	\end{align}
	\begin{align}
	\label{2_q_2}
	\CC^{2q_2}(\SS_2)&=\{-\mu_2\leq\mu\leq\mu_2\}
	\end{align}
	Then, we take the $2(q_1+q_2)$-th order difference co-array $\SS_1\cup\SS'_2$ where \[\SS'_2=\{2\beta_1\mu_1\cdot d,2\beta_2\mu_1\cdot d,\dots,2\beta_{N_2}\mu_1\cdot d\}\]
	This new co-array can generate consecutive lags \[\CC^{2(q_1+q_2)}(\SS_1\cup\SS'_2)=\{-2\mu_1\mu_2-\mu_1\leq\mu\leq2\mu_1\mu_2+\mu_1\}\]
	
	\begin{proof} Without lost of generality, we only consider positive lags. Consider any integer $\mu\in[-2\mu_1\mu_2-\mu_1,2\mu_1\mu_2+\mu_1]$. We have the fact that $\mu$ can be uniquely expressed as \[\mu=2k\cdot\mu_1+r\] for some integers $k\in[0,\mu_2]$ and $r\in[-\mu_1+1,\mu_1]$. Since $\CC^{2q_1}(\SS_1)$ and $\CC^{2q_2}(\SS_2)$ covers integer expressed in (\ref{2_q_1}) and (\ref{2_q_2}). Let $r\in\CC^{2q_1}(\SS_1)$ and $k\in\CC^{2q_2}(\SS_2)$. Therefore, there exist 
	\begin{align*}
	r&=\sum_{i=1}^{q_1} \alpha_{n_i}-\sum_{i=q_1+1}^{2q_1} \alpha_{n_i}, n_i\in[1,N_1]\\
	k&=\sum_{i=1}^{q_2} \beta_{n_i}-\sum_{i=q_2+1}^{2q_2} \beta_{n_i}, n_i\in[1,N_2]
	\end{align*}
	Then, we have found
	\begin{align*}
	\mu&=r+2k\cdot\mu_1\\
	&=(\sum_{i=1}^{q_1} \alpha_{n_i}-\sum_{i=q_1+1}^{2q_1} \alpha_{n_i})+2\cdot(\sum_{i=1}^{q_2} \beta_{n_i}-\sum_{i=q_2+1}^{2q_2} \beta_{n_i})\cdot\mu_1\\
	&=(\sum_{i=1}^{q_1} \alpha_{n_i}-\sum_{i=q_1+1}^{2q_1} \alpha_{n_i})+(\sum_{i=1}^{q_2} 2\beta_{n_i}\mu_1-\sum_{i=q_2+1}^{2q_2} 2\beta_{n_i}\mu_1)\\
	&\in\CC^{2(q_1+q_2)}(\SS_1\cup\SS'_2)
	\end{align*}
	\end{proof}
    
	Now, applying the layering technique shown above, we break up the $2q$-th order co-array structure into a couple of sixth-order shifted Diophantine array proposed in Section \ref{sec:6-order} . In the cases when $2q\equiv4$ or $2q\equiv2\pmod{6}$, we can then layer a fourth-order shifted array, shown in Section \ref{sec:4-order}, or a regular nested array, respectively. This method is efficient. In the case when $2q\equiv0\pmod{6}$, the such layering construction produce DoF in an order of
	\[O(2\cdot\frac{(2\cdot\frac{17}{2}(N/2q)^6)^\frac{q}{3}}{2}) = O(17^\frac{q}{3}(N/2q)^{2q}).\]
	Similarly, when $2q\equiv2$ or  $2q\equiv4\pmod{6}$, it produces $O(2\cdot17^\frac{q-1}{3}(N/2q)^{2q})$ and $O(5\cdot17^\frac{q-2}{3}(N/2q)^{2q})$ DoF respectively. As mentioned before, in comparison, the best known DoF bound of $2q$-th order array \cite{unique2019} in exisitng works is $O(2^q(N/2q)^{2q})$

\begin{figure}
   \centering 
 \caption{Third-order Diophantine Sampling and Array with Noise}
 \vspace{-0.1 in}
%	 \begin{minipage}[b]{0.5 \textwidth}
     \centering
   \includegraphics[width=0.5\textwidth]{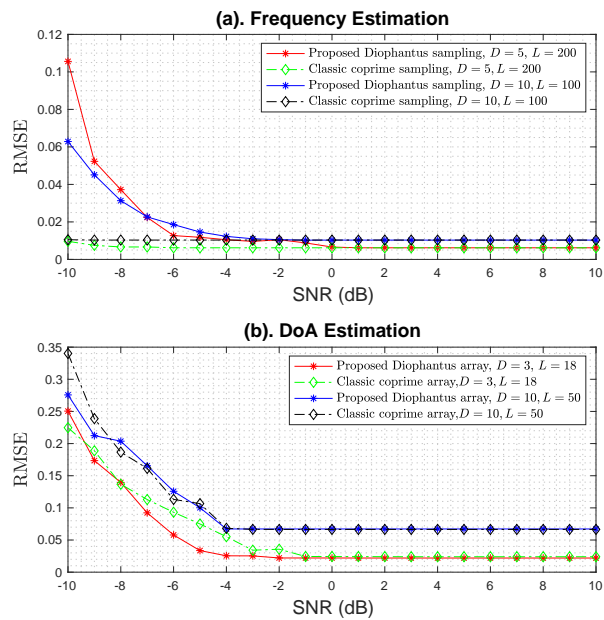}
   \vspace{-0.4 in}
 \label{sim}
\end{figure}
	
	\section{Simulation}
	\label{sec:sim}
	\subsection{Robustness}
    \noindent The first set of experiments we conduct is to measure the robustness of proposed schemes. We mainly focus on the novel third-order based Diophantine sampling and array proposed in Section \ref{sec:complex_fre} and   \ref{sec:complex_doa}, while the performance of $2q$-cumulant based method has been extensively tested in previous works such as \cite{2q2011, 2q2019}. We present the results of two numerical simulations in Fig. \ref{sim}, which compare their performances in the applications of frequency and DoA estimation with those of classic co-prime sensing.

    For frequency estimation, we randomly generate $5$ and $10$ frequencies respectively and set $K=L$. The proposed method is used to estimate the frequencies with three samplers of down-sampling rates $M_1=2+10^6$, $M_2=3+10^6$ and $M_3=5+10^6$. We average the root mean square error (RMSE) on 100 independent Monte-carlo runs with signal-to-noise ratio (SNR) ranged from -10 to 10dB. To evaluate the performance of the proposed method, we perform with Multiple Signal Classification (MUSIC) on attained virtual samples to estimate frequencies and the results are shown in Fig. 1(a). As expected, a small compromise in accuracy exists for proposed strategy since we use a higher third-order statistics. However, the time delay of co-prime sampling is around $10^{6}$ times longer than that of the Diophantine sampling.
    
    In the case of DoA estimation, we randomly generate $3$ and $10$ independent sources, respectively. For the third order Diophantine array proposed, we select $p_1=4$, $p_2=3$ and $p_3=5$, and thus totally $13$ sensors are used. We still apply MUSIC algorithm with $L=18$ and $L=50$ snapshots. As analyzed before, we can find $O(L^2)$ samples to estimate an autocorrelation for each lag $k$. The simulations are run 100 times and the averaged RMSEs are shown in Fig. 1(b). We can see that the proposed strategy in some cases is even with better performance than classic coprime array. Furthermore, our method provides up to $159$ DoF comparing to $57$ in the co-prime array. Also, the minimal spacing between sensors is $3d$, comparing to $d$ in all existing arrays.

\subsection{Array Configuration}
\noindent The second set of experiments we conduct is about the array configurations with a comprehensive comparison to existing sparse arrays. In the following, we use $\tau(j)$ to denote the number of adjacent sensor pairs whose corresponding spacing equals $jd$. First, we consider a third-order array shown in Section \ref{sec:complex_doa}, where we select $p_1=13$, $p_2=7$, $p_3=11$ with $N=36$ sensors in total. Such construction provides 2337 DoF, of which associated histogram of $\tau$ is shown in Fig. \ref{fig:3-order}. In comparison, we consider the regular nested array \cite{nest}, with $37$ sensors shown in Fig. \ref{fig:nested} and augmented nested array \cite{super3}, with $36$ sensors shown in Fig. \ref{fig:aug_nested}, which provide 687 and 865 DoF, respectively. Besides enhanced DoF, the proposed third-order Diophantine array is also of a greater sparsity with minimal spacing equaling $7d$ compared to $d$ in existing arrays.   

\begin{figure}[!htbp]
    \centering
    \includegraphics[width=0.5\textwidth]{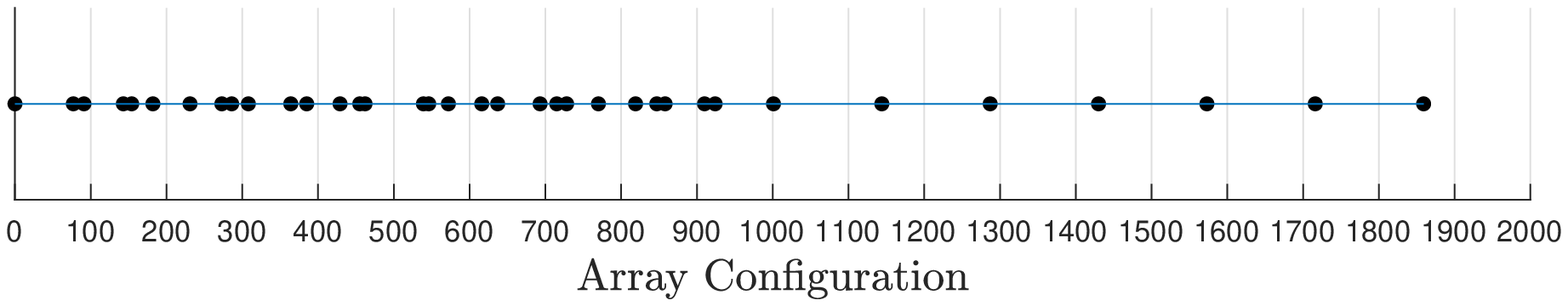}
    \includegraphics[width=0.5\textwidth]{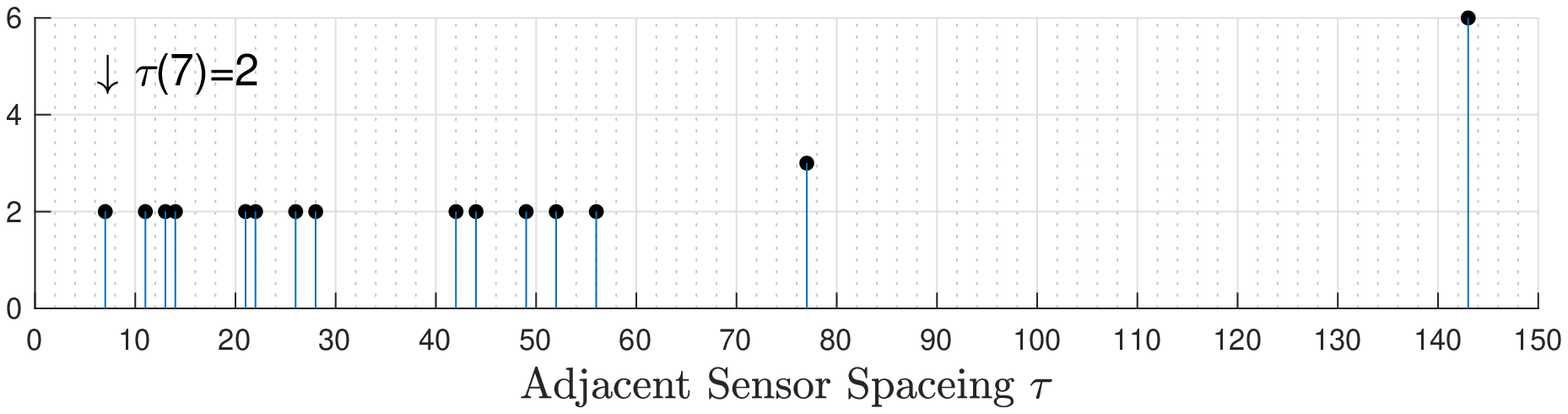}
    \caption{Third-order Diophantine Array}
    \label{fig:3-order}
        \vspace{-0.15 in}
\end{figure}

\begin{figure}[!htbp]
    \centering
    \includegraphics[width=0.5\textwidth]{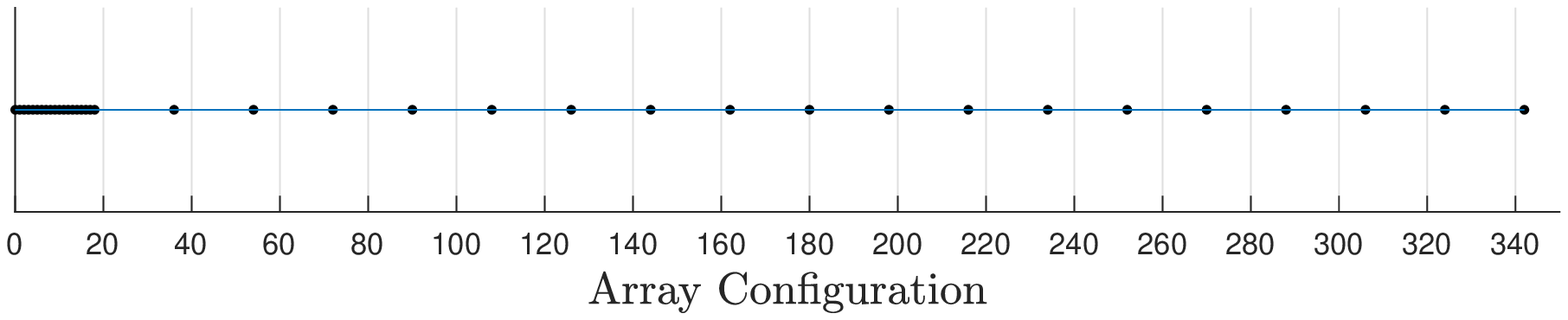}
    \includegraphics[width=0.5\textwidth]{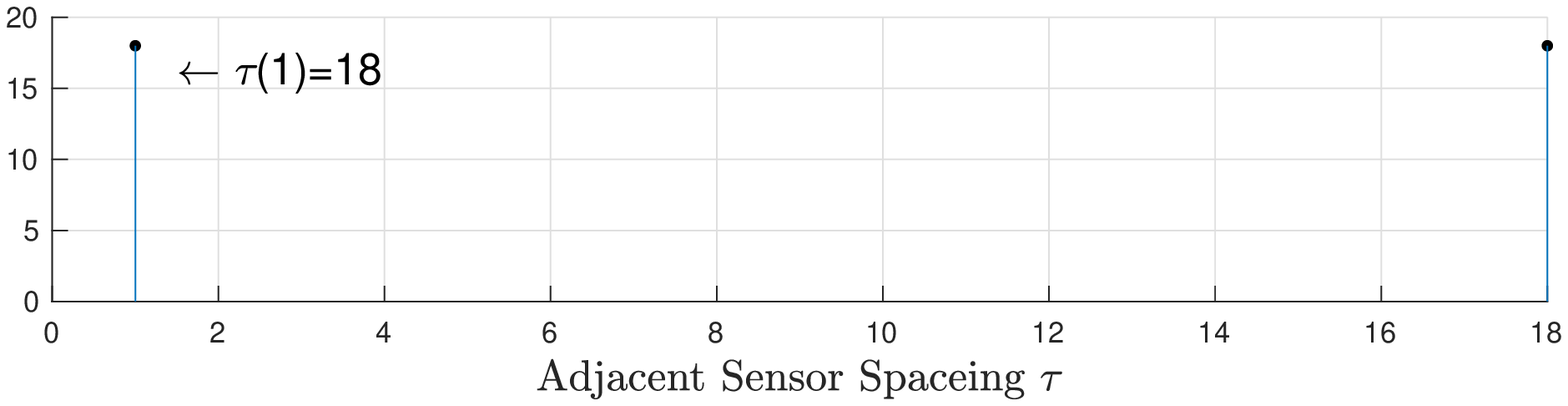}
    \caption{Regular Nested Array \cite{nest}}
    \label{fig:nested}
      \vspace{-0.15 in}
\end{figure}

\begin{figure}[!htbp]
    \centering
    \includegraphics[width=0.5\textwidth]{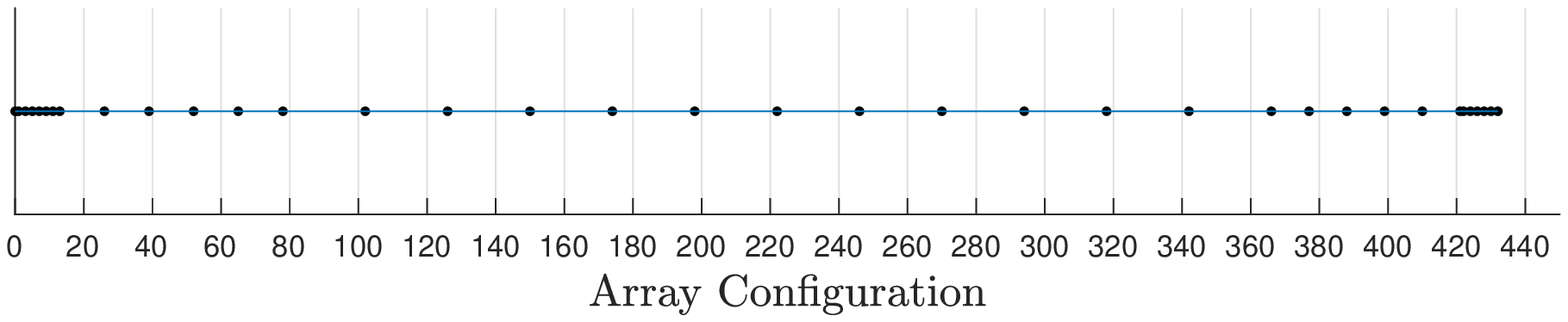}
    \includegraphics[width=0.5\textwidth]{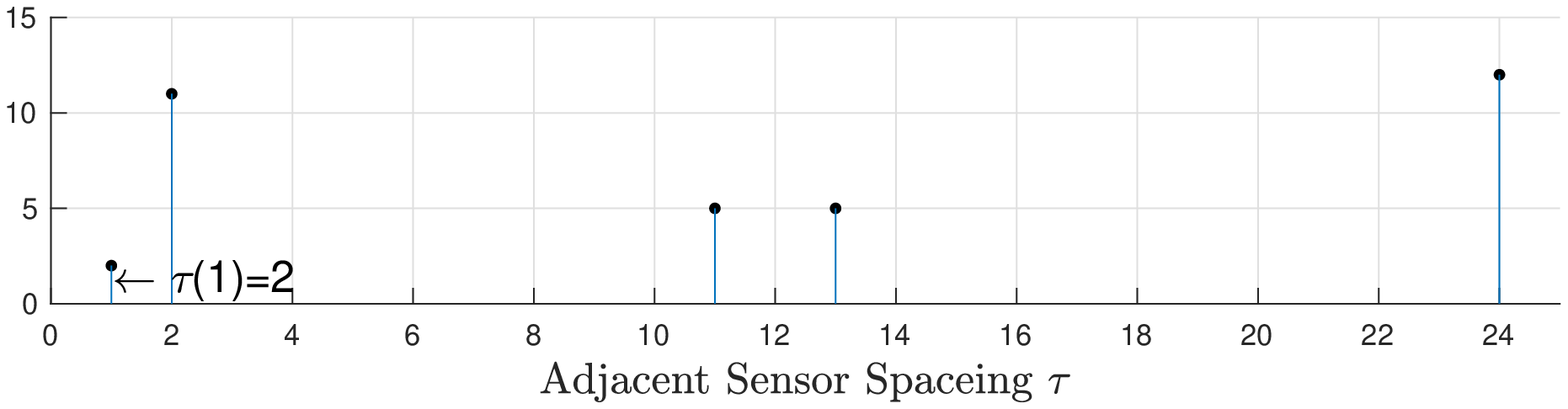}
    \caption{Augmented Nested Array II \cite{super3}}
    \label{fig:aug_nested}
        \vspace{-0.15 in}
\end{figure}

\begin{figure}[!htbp]
\label{fig:4-order}
    \centering
    \includegraphics[width=0.5\textwidth]{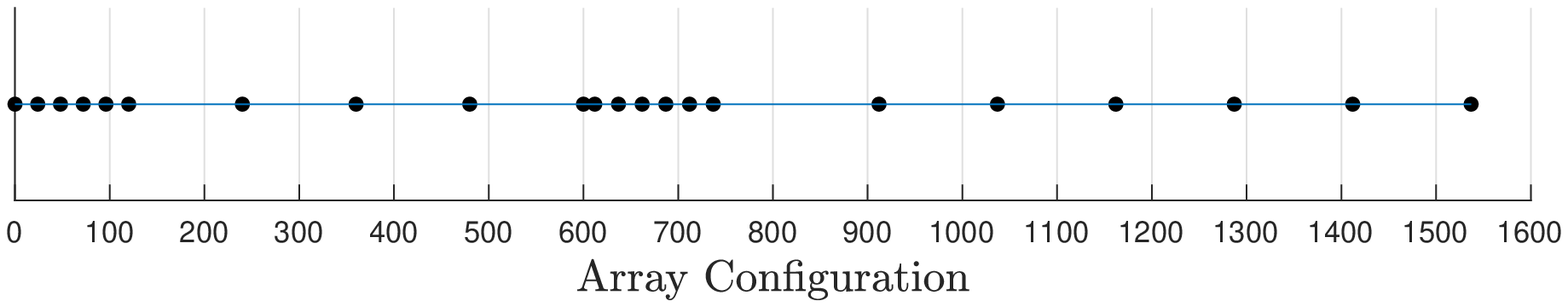}
    \includegraphics[width=0.5\textwidth]{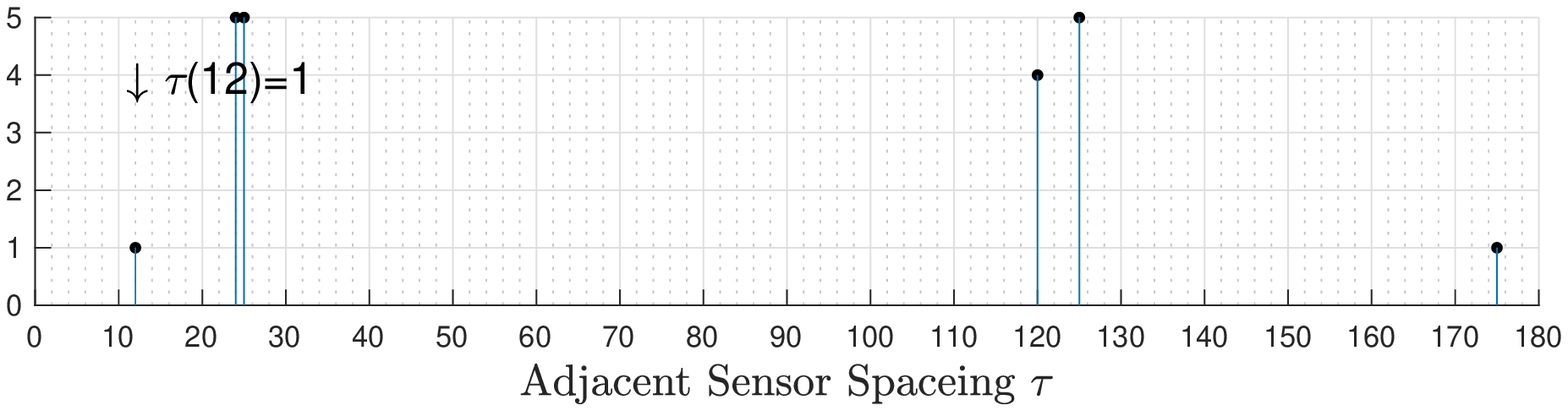}
    \caption{Fourth-order Shifted Array}
     \vspace{-0.1 in}
\end{figure}
\begin{figure}[!htbp]
\label{fig:6-order}
    \centering
    \includegraphics[width=0.5\textwidth]{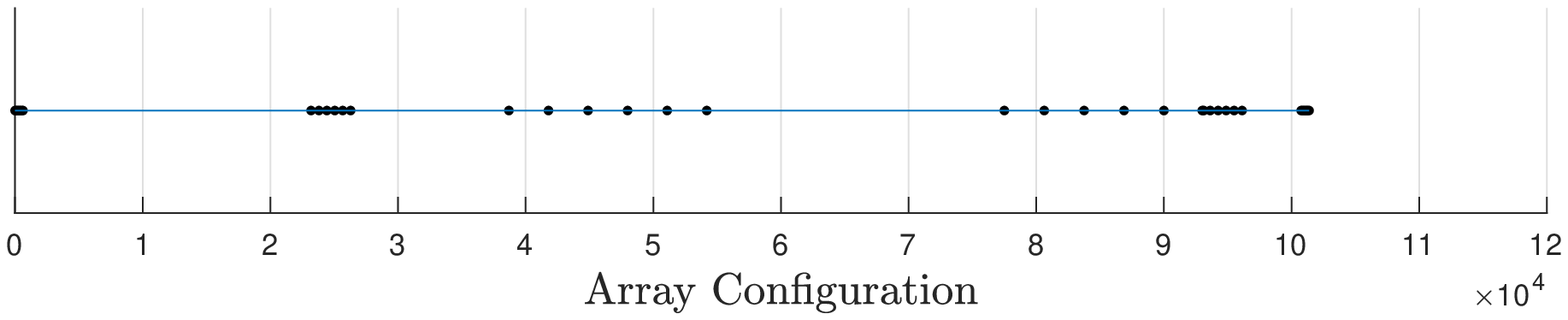}
    \includegraphics[width=0.5\textwidth]{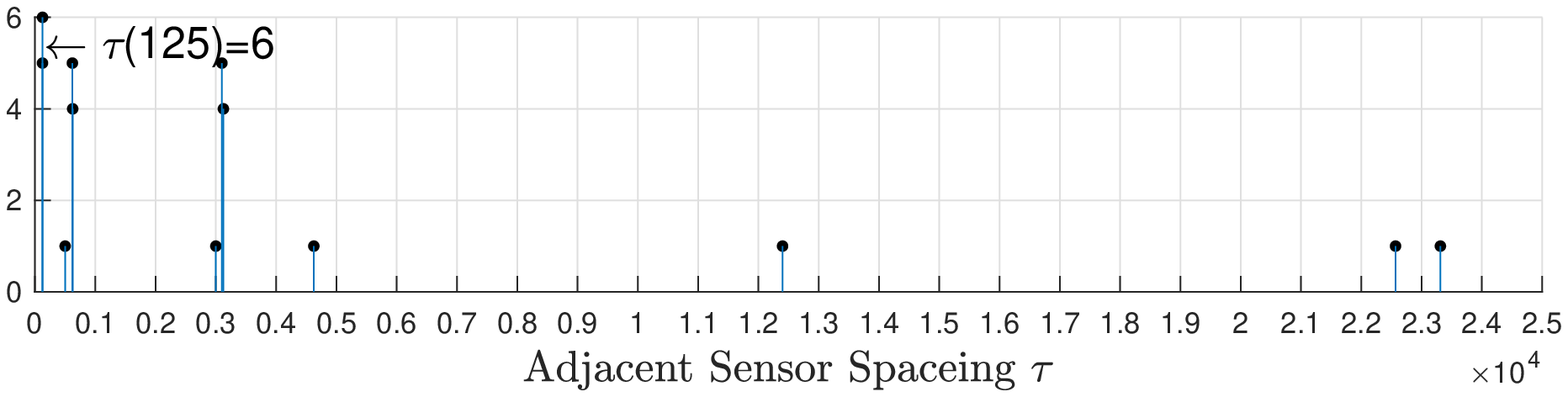}
    \caption{Sixth-order Shifted Array}
     \vspace{-0.15 in}
\end{figure}

\begin{figure}[!htbp]
    \centering
    \includegraphics[width=0.5\textwidth]{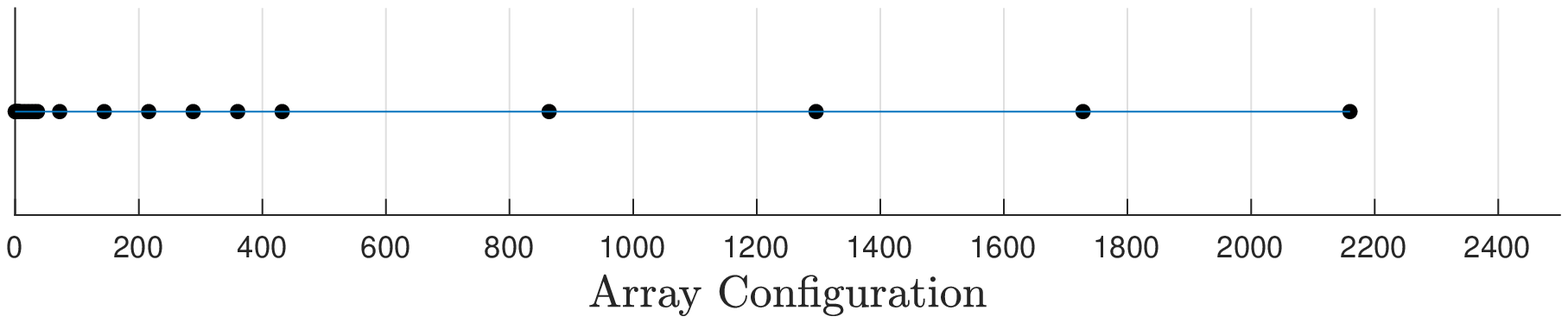}
    \includegraphics[width=0.5\textwidth]{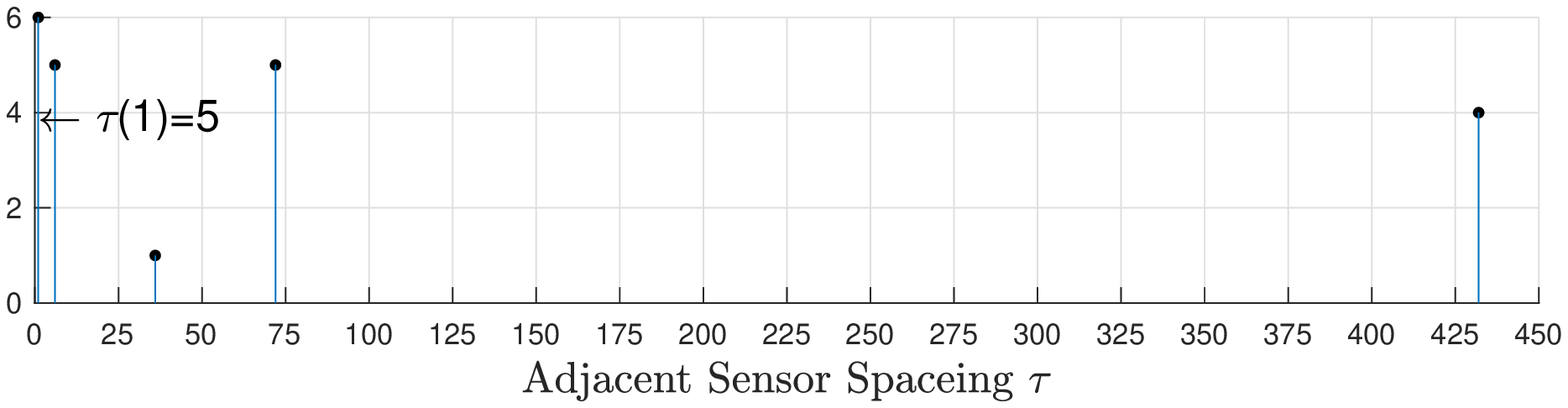}
    \caption{Fourth-order Improved Nested Array \cite{2q2019}}
    \label{fig:4-order-19}
     \vspace{-0.15 in}
\end{figure}

In the following, we consider the higher order cases. By uniformly selecting $N_i=5$, $i=1,2,3,4$, $M_1=25$ and $M_2=24$, we have a fourth-order shifted array defined in Section \ref{sec:4-order} with $N=22$ sensors. It provides 3445 DoF while the minimal sensor spacing equals $12d$. Similarly, by uniformly selecting $N_i=5$, for $i=1,2,...,6$, $M_1=125$ and $M_2=124$, we have a sixth-order shifted array defined in Section \ref{sec:4-order} with $N=36$ sensors and 271497 DoF. It is noted that the proposed fourth-order and sixth-order shifted Diophantine arrays are of $\Theta(N^2)$ and $\Theta(N^3)$ minimal spacing. As a comparison, we include the improved fourth-order nested array proposed in \cite{2019nested}, shown in Fig. \ref{fig:4-order-19}.

\section{Conclusion and Future Direction}
\noindent From a viewpoint of Diophantine equation and sparse ruler model, we present two sparse sensing frameworks to utilize higher-order statistics with stronger sampling efficiency. The techniques proposed in this paper shed light on how to efficiently coordinate multiple samplers either in temporal or spatial domain. Besides enhanced DoF, the Diophantine sampling and array proposed asymptotically match the optimal sampling delay and, especially for the array configuration design, it allows arbitrary sparsity given sufficiently many sensors. Meanwhile, we also leave several interesting open questions. For example on the minimal sensor spacing, whether for arbitrary $2q$-th order array, it can match $O(N^q)$. We only showed the possibility for the fourth and sixth order cases. 
		
\vspace{-0.1 in}

\bibliographystyle{ieeetr}
\bibliography{ref}

\begin{thebibliography}{10}

\bibitem{xiao2015error}
L.~Xiao and X.-G. Xia, ``Error correction in polynomial remainder codes with
  non-pairwise coprime moduli and robust chinese remainder theorem for
  polynomials,'' {\em IEEE Transactions on Communications}, vol.~63, no.~3,
  pp.~605--616, 2015.

\bibitem{xiao2016towards}
L.~Xiao, X.-G. Xia, and H.~Huo, ``Towards robustness in residue number
  systems,'' {\em IEEE Transactions on Signal Processing}, vol.~65, no.~6,
  pp.~1497--1510, 2016.

\bibitem{tsp2018}
H.~Xiao, Y.~Huang, Y.~Ye, and G.~Xiao, ``Robustness in chinese remainder
  theorem for multiple numbers and remainder coding,'' {\em IEEE Transactions
  on Signal Processing}, vol.~66, no.~16, pp.~4347--4361, 2018.

\bibitem{sp2017notes}
H.~Xiao and G.~Xiao, ``Notes on crt-based robust frequency estimation,'' {\em
  Signal processing}, vol.~133, pp.~13--17, 2017.

\bibitem{tvt2019}
H.~Xiao and G.~Xiao, ``On solving ambiguity resolution with robust chinese
  remainder theorem for multiple numbers,'' {\em IEEE Transactions on Vehicular
  Technology}, vol.~68, no.~5, pp.~5179--5184, 2019.

\bibitem{RingRCRT2020}
L.~Xiao, X.-G. Xia, and Y.-P. Wang, ``Exact and robust reconstructions of
  integer vectors based on multidimensional chinese remainder theorem
  (md-crt),'' {\em IEEE Transactions on Signal Processing}, vol.~68,
  pp.~5349--5364, 2020.

\bibitem{sp2021}
H.~Xiao, N.~Du, Z.~Wang, and G.~Xiao, ``Wrapped ambiguity gaussian mixed model
  with applications in sparse sampling based multiple parameter estimation,''
  {\em Signal Processing}, vol.~179, p.~107825, 2021.

\bibitem{WCL2019}
H.~Xiao and G.~Xiao, ``A framework of topology-transparent scheduling based on
  polynomial ring,'' {\em IEEE Wireless Communications Letters}, vol.~8, no.~3,
  pp.~829--832, 2019.

\bibitem{coprime1}
P.~Pal and P.~P. Vaidyanathan, ``Coprime sampling and the music algorithm,'' in
  {\em Digital Signal Processing Workshop and IEEE Signal Processing Education
  Workshop (DSP/SPE), 2011 IEEE}, pp.~289--294, IEEE, 2011.

\bibitem{coprime2}
P.~P. Vaidyanathan and P.~Pal, ``Sparse sensing with co-prime samplers and
  arrays,'' {\em IEEE Transactions on Signal Processing}, vol.~59, no.~2,
  pp.~573--586, 2011.

\bibitem{co-array1}
C.-L. Liu and P.~P. Vaidyanathan, ``Coprime arrays and samplers for space-time
  adaptive processing,'' in {\em 2015 IEEE International Conference on
  Acoustics, Speech and Signal Processing (ICASSP)}, pp.~2364--2368, IEEE,
  2015.

\bibitem{pushing}
P.~Pal and P.~Vaidyanathan, ``Pushing the limits of sparse support recovery
  using correlation information,'' {\em IEEE Transactions on Signal
  Processing}, vol.~63, no.~3, pp.~711--726, 2015.

\bibitem{qin}
S.~Qin, Y.~D. Zhang, and M.~G. Amin, ``Doa estimation exploiting coprime
  frequencies,'' in {\em Wireless Sensing, Localization, and Processing IX},
  vol.~9103, p.~91030E, International Society for Optics and Photonics, 2014.

\bibitem{localization}
S.~Qin, Y.~D. Zhang, M.~G. Amin, and F.~Gini, ``Frequency diverse coprime
  arrays with coprime frequency offsets for multitarget localization,'' {\em
  IEEE Journal of Selected Topics in Signal Processing}, vol.~11, no.~2,
  pp.~321--335, 2017.

\bibitem{co-array2}
C.~Zhou, Y.~Gu, S.~He, and Z.~Shi, ``A robust and efficient algorithm for
  coprime array adaptive beamforming,'' {\em IEEE Transactions on Vehicular
  Technology}, vol.~67, no.~2, pp.~1099--1112, 2018.

\bibitem{sp2020}
Z.~Fu, P.~Charg{\'e}, and Y.~Wang, ``Multi-rate coprime sampling for frequency
  estimation with increased degrees of freedom,'' {\em Signal Processing},
  vol.~166, p.~107258, 2020.

\bibitem{sp2020_2}
Z.~Zheng, T.~Yang, W.-Q. Wang, and S.~Zhang, ``Robust adaptive beamforming via
  coprime coarray interpolation,'' {\em Signal Processing}, vol.~169,
  p.~107382, 2020.

\bibitem{sp2019}
Z.~Fu, P.~Charg{\'e}, and Y.~Wang, ``Coprime sampling with embedded random
  delays,'' {\em Signal Processing}, vol.~158, pp.~150--155, 2019.

\bibitem{sp2017}
S.~Huang, H.~Zhang, H.~Sun, L.~Yu, and L.~Chen, ``Frequency estimation of
  multiple sinusoids with three sub-nyquist channels,'' {\em Signal
  Processing}, vol.~139, pp.~96--101, 2017.

\bibitem{music}
M.~Wang and A.~Nehorai, ``Coarrays, music, and the cram{\'e}r--rao bound,''
  {\em IEEE Transactions on Signal Processing}, vol.~65, no.~4, pp.~933--946,
  2017.

\bibitem{li2019coprime}
C.~Li, L.~Gan, and C.~Ling, ``Coprime sensing via chinese remaindering over
  quadratic fields—part i: Array designs,'' {\em IEEE Transactions on Signal
  Processing}, vol.~67, no.~11, pp.~2898--2910, 2019.

\bibitem{twodimension-1}
P.~Pal and P.~Vaidyanathan, ``Nested arrays in two dimensions, part i:
  Geometrical considerations,'' {\em IEEE Transactions on Signal Processing},
  vol.~60, no.~9, p.~4694, 2012.

\bibitem{twodimension-2}
P.~Pal and P.~Vaidyanathan, ``Nested arrays in two dimensions, part ii:
  Application in two dimensional array processing,'' {\em IEEE Transactions on
  Signal Processing}, vol.~60, no.~9, pp.~4706--4718, 2012.

\bibitem{multidimension}
P.~Vaidyanathan and P.~Pal, ``Theory of sparse coprime sensing in multiple
  dimensions,'' {\em IEEE Transactions on Signal Processing}, vol.~59, no.~8,
  pp.~3592--3608, 2011.

\bibitem{2014nested}
K.~Han and A.~Nehorai, ``Nested array processing for distributed sources,''
  {\em IEEE Signal Processing Letters}, vol.~21, no.~9, pp.~1111--1114, 2014.

\bibitem{super1}
C.-L. Liu and P.~Vaidyanathan, ``Super nested arrays: Linear sparse arrays with
  reduced mutual coupling—part ii: High-order extensions,'' {\em IEEE
  Transactions on Signal Processing}, vol.~64, no.~16, pp.~4203--4217, 2016.

\bibitem{super2}
C.-L. Liu and P.~Vaidyanathan, ``Super nested arrays: Linear sparse arrays with
  reduced mutual coupling—part ii: High-order extensions,'' {\em IEEE
  Transactions on Signal Processing}, vol.~64, no.~16, pp.~4203--4217, 2016.

\bibitem{super3}
J.~Liu, Y.~Zhang, Y.~Lu, S.~Ren, and S.~Cao, ``Augmented nested arrays with
  enhanced dof and reduced mutual coupling,'' {\em IEEE Transactions on Signal
  Processing}, vol.~65, no.~21, pp.~5549--5563, 2017.

\bibitem{nest}
P.~Pal and P.~Vaidyanathan, ``Nested arrays: A novel approach to array
  processing with enhanced degrees of freedom,'' {\em IEEE Transactions on
  Signal Processing}, vol.~58, no.~8, pp.~4167--4181, 2010.

\bibitem{2019nested}
G.~Qin, Y.~D. Zhang, and M.~Amin, ``Doa estimation exploiting moving dilated
  nested arrays,'' {\em IEEE Signal Processing Letters}, 2019.

\bibitem{unique2015}
S.~Qin, Y.~D. Zhang, and M.~G. Amin, ``Generalized coprime array configurations
  for direction-of-arrival estimation,'' {\em IEEE Transactions on Signal
  Processing}, vol.~63, no.~6, pp.~1377--1390, 2015.

\bibitem{unique2019}
G.~Qin, M.~G. Amin, and Y.~D. Zhang, ``Doa estimation exploiting sparse array
  motions,'' {\em IEEE Transactions on Signal Processing}, vol.~67, no.~11,
  pp.~3013--3027, 2019.

\bibitem{erdos1948}
P.~Erd{\"o}s and I.~G{\'a}l, ``On the representation of 1, 2,..., n by
  differences,'' {\em Indagationes Math}, vol.~10, pp.~379--382, 1948.

\bibitem{golomb1953}
W.~C. Babcock, ``Intermodulation interference in radio systems frequency of
  occurrence and control by channel selection,'' {\em The Bell System Technical
  Journal}, vol.~32, no.~1, pp.~63--73, 1953.

\bibitem{leech1956}
J.~Leech, ``On the representation of 1, 2,…, n by differences,'' {\em Journal
  of the London Mathematical Society}, vol.~1, no.~2, pp.~160--169, 1956.

\bibitem{2q2011}
P.~Pal and P.~Vaidyanathan, ``Multiple level nested array: An efficient
  geometry for $2 q $ th order cumulant based array processing,'' {\em IEEE
  Transactions on Signal Processing}, vol.~60, no.~3, pp.~1253--1269, 2011.

\bibitem{2q2019}
Q.~Shen, W.~Liu, W.~Cui, S.~Wu, and P.~Pal, ``Simplified and enhanced multiple
  level nested arrays exploiting high-order difference co-arrays,'' {\em IEEE
  Transactions on Signal Processing}, vol.~67, no.~13, pp.~3502--3515, 2019.

\bibitem{autocorrelation2017}
D.~G. Chachlakis, P.~P. Markopoulos, and F.~Ahmad, ``The mean-squared-error of
  autocorrelation sampling in coprime arrays,'' in {\em 2017 IEEE 7th
  International Workshop on Computational Advances in Multi-Sensor Adaptive
  Processing (CAMSAP)}, pp.~1--5, IEEE, 2017.

\bibitem{autocorrelation2015}
C.-L. Liu and P.~Vaidyanathan, ``Remarks on the spatial smoothing step in
  coarray music,'' {\em IEEE Signal Processing Letters}, vol.~22, no.~9,
  pp.~1438--1442, 2015.

\bibitem{complexity}
S.~Qin, Y.~D. Zhang, and M.~G. Amin, ``High-resolution frequency estimation
  using generalized coprime sampling,'' in {\em Mobile Multimedia/Image
  Processing, Security, and Applications 2015}, vol.~9497, p.~94970K,
  International Society for Optics and Photonics, 2015.

\bibitem{hardy1979introduction}
G.~H. Hardy, E.~M. Wright, {\em et~al.}, {\em An introduction to the theory of
  numbers}.
\newblock Oxford university press, 1979.

\bibitem{radar2}
S.~Qin, Y.~D. Zhang, and M.~G. Amin, ``Doa estimation of mixed coherent and
  uncorrelated targets exploiting coprime mimo radar,'' {\em Digital Signal
  Processing}, vol.~61, pp.~26--34, 2017.

\bibitem{radar3}
S.~Qin, Y.~D. Zhang, M.~G. Amin, and F.~Gini, ``Frequency diverse coprime
  arrays with coprime frequency offsets for multitarget localization,'' {\em
  IEEE Journal of Selected Topics in Signal Processing}, vol.~11, no.~2,
  pp.~321--335, 2017.

\bibitem{radar1}
Q.~Wu and Q.~Liang, ``Coprime sampling for nonstationary signal in radar signal
  processing,'' {\em EURASIP Journal on Wireless Communications and
  Networking}, vol.~2013, no.~1, p.~58, 2013.

\bibitem{FSF}
J.-D. Lin, W.-H. Fang, Y.-Y. Wang, and J.-T. Chen, ``Fsf music for joint doa
  and frequency estimation and its performance analysis,'' {\em IEEE
  Transactions on Signal Processing}, vol.~54, no.~12, pp.~4529--4542, 2006.

\bibitem{chev2q}
P.~Chevalier, A.~Ferr{\'e}ol, and L.~Albera, ``High-resolution direction
  finding from higher order statistics: The $2 rm q $-music algorithm,'' {\em
  IEEE Transactions on signal processing}, vol.~54, no.~8, pp.~2986--2997,
  2006.

\bibitem{cardoso}
J.-F. Cardoso and E.~Moulines, ``Asymptotic performance analysis of
  direction-finding algorithms based on fourth-order cumulants,'' {\em IEEE
  transactions on Signal Processing}, vol.~43, no.~1, pp.~214--224, 1995.

\bibitem{chev4}
P.~Chevalier and A.~Ferreol, ``On the virtual array concept for the
  fourth-order direction finding problem,'' {\em IEEE Transactions on Signal
  Processing}, vol.~47, no.~9, pp.~2592--2595, 1999.

\bibitem{exnested}
Q.~Shen, W.~Liu, W.~Cui, and S.~Wu, ``Extension of nested arrays with the
  fourth-order difference co-array enhancement,'' in {\em 2016 IEEE
  International Conference on Acoustics, Speech and Signal Processing
  (ICASSP)}, pp.~2991--2995, IEEE, 2016.

\bibitem{excoprime}
Q.~Shen, W.~Liu, W.~Cui, and S.~Wu, ``Extension of co-prime arrays based on the
  fourth-order difference co-array concept,'' {\em IEEE Signal Processing
  Letters}, vol.~23, no.~5, pp.~615--619, 2016.

\bibitem{shift}
J.~Cai, W.~Liu, R.~Zong, and Q.~Shen, ``An expanding and shift scheme for
  constructing fourth-order difference coarrays,'' {\em IEEE Signal Processing
  Letters}, vol.~24, no.~4, pp.~480--484, 2017.

\bibitem{offgrid}
T.~H. Al~Mahmud, Z.~Ye, K.~Shabir, R.~Zheng, and M.~S. Islam, ``Off-grid doa
  estimation aiding virtual extension of coprime arrays exploiting fourth order
  difference co-array with interpolation,'' {\em IEEE Access}, vol.~6,
  pp.~46097--46109, 2018.

\bibitem{SEML}
Q.~Shen, W.~Liu, W.~Cui, S.~Wu, and P.~Pal, ``Simplified and enhanced multiple
  level nested arrays exploiting high-order difference co-arrays,'' {\em IEEE
  Transactions on Signal Processing}, vol.~67, no.~13, pp.~3502--3515, 2019.

\end{thebibliography}

\end{document}